\newcommand{\FF}{\mathbb F}
\newcommand{\NN}{\mathbb N}
\newcommand{\KK}{\mathbb K}
\newcommand{\iI}{\mathbb I}
\newcommand{\cG}{\mathcal G}
\newcommand{\cC}{\mathcal C}
\newcommand{\cW}{\mathcal W}
\newcommand{\cQ}{\mathcal Q}
\newcommand{\cP}{\mathcal P}
\newcommand{\fR}{\mathcal M}
\newcommand{\fmm}{\mathfrak m}
\newcommand{\fb}{\mathfrak b}
\newcommand{\fs}{\mathfrak s}
\newcommand{\fq}{\mathfrak q}
\newcommand{\PG}{\mathrm{PG}}
\newcommand{\hA}{\widehat{A}}
\newcommand{\hB}{\widehat{B}}
\newcommand{\ore }{\preceq}
\newcommand{\ors }{\prec}
\theoremstyle{theorem}
\newtheorem{theorem}{Theorem}[section]
\theoremstyle{definition}
\newtheorem{definition}{Definition}[section]
\begin{document}
\begin{frontmatter}
\title{Enumerative Coding for Line Polar Grassmannians with applications to codes}
\author[IC]{Ilaria Cardinali\corref{cor1}}
\ead{ilaria.cardinali@unisi.it}
\address[IC]{Department of Information Engineering and Mathematics, University of Siena,
Via Roma 56, I-53100, Siena, Italy}
\author[LG]{Luca Giuzzi}
\ead{luca.giuzzi@unibs.it}
\address[LG]{DICATAM - Section of Mathematics,
University of Brescia,
Via Branze 53, I-25123, Brescia, Italy}
\cortext[cor1]{Corresponding author.}



\begin{abstract}
A $k$-polar Grassmannian is a geometry having as pointset the set of all $k$-dimensional subspaces of a vector space $V$ which are totally isotropic for a given non-degenerate bilinear form  $\mu$ defined on $V.$
Hence it can be regarded as a subgeometry of the ordinary $k$-Grassmannian.
In this paper we deal with orthogonal line Grassmannians and with symplectic line Grassmannians, i.e. we assume $k=2$ and $\mu$ to be
 a non-degenerate symmetric or alternating form.
We will provide a method to efficiently enumerate the pointsets of  both orthogonal and symplectic line Grassmannians.
This has several nice applications; among them,
we shall discuss
an efficient encoding/decoding/error correction strategy for line
polar Grassmann codes of either type.
\end{abstract}

\begin{keyword}
Enumerative Coding \sep Polar Grassmannian \sep Linear Code.
\MSC[2010] 14M15 \sep 94B27 \sep 94B05
\end{keyword}
\end{frontmatter}

\section{Introduction}\label{s1}
Let $V$ be a vector space of dimension $n$ over a  field $\KK$.
For any $k<n$, the $k$--Grassmannian $\cG_{n,k}$
of $V$ is the geometry whose pointset $\cP(\cG_{n,k})$ consists of the $k$-dimensional subspaces of $V$
and whose lines are the sets of the form
\[ \ell_{X,Y}:=\{Z\colon  X<Z<Y\colon \dim Z=k\} \]
where $X$ and $Y$ are two subspaces of $V$ with $\dim(X)=k-1$ and $\dim(Y)=k+1.$ Incidence is containment.
It is well known that $\cG_{n,k}$ can be embedded,
as an algebraic variety $\mathbb{G}_{n,k}$,
into the projective space
$\PG(\bigwedge^kV)$, by means of the
Pl\"{u}cker embedding $e_k$. More precisely, $e_k$ maps the $k$-vector subspace $\langle v_1,v_2,\ldots,v_k\rangle$ of $V$ to the point $\langle v_1\wedge v_2\wedge\cdots\wedge v_k\rangle$ of $\PG(\bigwedge^k V)$  (see~\cite[Chapter VII]{HP1} and
\cite[Chapter XVI]{HP2} for details).

In this paper we assume $\KK$ to be a finite field $\FF_q$ of order $q.$ A basic problem  is to construct an
enumerator for the points of $\cG_{n,k}$, that is a bijection $\iota:\cP(\cG_{n,k})\to\{0,1,\ldots,N-1\}$, where $N:={n\brack k}_q$ is the number of points of $\cG_{n,k}.$
This has been studied extensively, see \cite{SE}, because of its
relevance to applications.
In particular, Grassmannians defined over finite fields have been used in coding theory
to construct linear projective codes~\cite{R1,R2} and network codes~\cite{KK}.
In general, it is not convenient to implement
a Grassmann code by na\"ively providing a generator matrix, as the number of columns is  large; so, a point enumerator $\iota$ provides an efficient way to uniquely identify subspaces corresponding to given positions. Several algorithms for enumerating the points of Grassmannians have been
proposed; see~\cite{M12,SE}.
We point out that, apart from their usefulness for coding theory, these algorithms also have independent geometric interest.

The present paper is concerned with the problem outlined above, focusing  on the case of line polar Grassmannians of either orthogonal or symplectic type.
These are proper subgeometries of line Grassmannians having as pointset
the set of lines of a vector space $V$ which are totally isotropic for
a given non-degenerate symmetric or alternating form.
We shall determine a method to enumerate these lines, following the basic approach of~\cite{Cover}.
In general, the case of polar Grassmannians is more involved than that of ordinary $k$-Grassmannians,
as there are some requirements imposed on the subspaces which have to be fulfilled; thus, a careful use of linear algebra (combined with combinatorial techniques) is necessary in order to get a reasonably efficient
representation.  We shall also examine in detail the computational complexity of the proposed algorithms.

Our algorithms will then be applied to polar Grassmann codes of either orthogonal or symplectic type.
These codes have been introduced respectively in~\cite{IL13} and in~\cite{IL15} as
linear codes arising from the Pl\"ucker embedding of polar Grassmannians of
orthogonal or symplectic type.
Some bounds on their minimum distance have been obtained:
it has been proved in~\cite{ILP14} for $q$ odd and in~\cite{IL16} for $q$
even that
if $k=2$ then
the minimum distance of a line orthogonal Grassman code is $q^{4n-5}-q^{3n-4}$;
the minimum distance of a line symplectic Grassman code is  $q^{4n-5}-q^{2n-3}$ (see~\cite{IL15}).

We shall set the notation in Section~\ref{Notation}
and recall some notions about polar Grassmann codes in
 Section~\ref{PGC}.
The organization of the paper and the main results are outlined in Section~\ref{organization}.

\subsection{Notation}\label{Notation}
Let $V:=V(2n+1,q)$ be a vector space of dimension
$2n+1$ over a finite field $\FF_q$ and let $\fq:V\to\FF_q$ be a non-degenerate
quadratic form. 
The $k$-orthogonal Grassmannian $\Delta_{n,k}$ is a point-line
geometry whose
points are all the totally $\fq$-singular $k$-subspaces of $V$ and
whose lines are the sets either
of the form
\[ \ell_{X,Z}:=\{ Y\in\cG_{2n+1,k}: X<Y<Z \}\text{\,\,for\,\,} k<n, \]
where $\dim X=k-1$, $\dim Z=k+1$ and $Z$ is totally $\fq$-singular or
of the form
 \[ \ell_{X}:=\{ Y:  X<Y<X^{\perp\fq}, \dim Y=n, \text{$Y$ totally singular } \}\text{\,\,for\,\,} k=n,\]
with $\dim X=n-1$ and
$X^{\perp\fq}$
the space orthogonal to $X$ with respect to the bilinear form
$\fb$ associated with $\fq$.
Incidence is defined in the natural way.

Likewise, denote by $\overline{V}:=V(2n,q)$ a vector space of dimension
$2n$ over a finite field $\FF_q$ and consider a non-degenerate
 alternating bilinear form  $\fs:\overline{V}\times \overline{V}\to\FF_q$.
The $k$-symplectic Grassmannian $\overline{\Delta}_{n,k}$ has
as points all totally $\fs$-isotropic $k$-spaces of $\overline{V}$ and as lines
the sets of the form
\[ \ell_{X,Z}=\{ Y\in\cG_{2n,k}: X<Y<Z \} \text{\,\,for\,\,} k<n, \]
with $\dim X=k-1$, $\dim Z=k+1$ and $Z$ totally $\fs$-isotropic
or
\[ \ell_{X}=\{ Y: X<Y, \dim Y=n, Y \text{ totally isotropic} \} \text{\,\,for\,\,} k=n,\]
with $\dim X=n-1$.

By construction,
any point of $\Delta_{n,k}$ is also a point of $\cG_{2n+1,k}$ and
any point of $\overline{\Delta}_{n,k}$ is also a point of $\cG_{2n,k}$.

Consider the embeddings $\varepsilon_k:=e_k|_{\Delta_{n,k}}$ and $\overline{\varepsilon}_k:=e_k|_{\overline{\Delta}_{n,k}}$ induced by the Pl\"{u}cker embedding $e_k$ on
the polar Grassmannians of orthogonal and symplectic type.
In particular,
$\varepsilon_k({\Delta_{n,k}})$ is a subvariety of $\mathbb{G}_{2n+1,k}$
and $\overline{\varepsilon}_k({\overline{\Delta}_{n,k}})$ is a subvariety of
$\mathbb{G}_{2n,k}$. We summarize what is known about  these embeddings.
We warn the reader that we shall always use vector dimensions.
\begin{theorem}[\cite{IP13}]
 Let $\varepsilon_k:\Delta_{n,k}\to\PG(\bigwedge^kV)$ be the
  restriction of the Pl\"ucker embedding to the orthogonal polar
  Grassmannian $\Delta_{n,k}$ and let $W_{n,k}:=\langle\varepsilon_k(\Delta_{n,k})\rangle$. Then,
     \begin{itemize}
     \item For $k<n$, $\varepsilon_k$ is projective and
           \[ \dim W_{n,k}=\begin{cases}
               {{2n+1}\choose{k}} & \text{ if $\mathrm{char}(\FF)$ odd } \\
               \binom{2n+1}{k}-\binom{2n+1}{k-2} & \text{ if $\mathrm{char}(\FF)=2$.}\\
           \end{cases}\]
   \item For $k=n$, $\varepsilon_n:\Delta_{n,n}\to \PG(W_{n,n})$
     maps lines into conics.
   \end{itemize}
 \end{theorem}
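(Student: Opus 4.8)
The plan is to treat the two assertions separately. For $k<n$ I would first dispose of projectivity, which is essentially free, and then compute $\dim W_{n,k}$ by bounding it above (the images lie in a natural $O(V,\fq_n)$-submodule) and below (they span it); the characteristic intervenes only in the choice of that submodule. For $k=n$ I would argue geometrically, reducing a line of $\Delta_{n,n}$ to a conic living inside a three-dimensional quotient space.

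Projectivity for $k<n$ is inherited from $e_k$: the restriction $\varepsilon_k$ is injective, and each line $\ell_{X,Z}$ of $\Delta_{n,k}$ is by definition also a line of $\cG_{2n+1,k}$ (since $Z$ is totally singular, every $Y$ with $X<Y<Z$ is automatically singular), hence is carried by $e_k$ onto a projective line of $\mathbb{G}_{2n+1,k}$. For the ambient space, note that $W_{n,k}$ is $O(V,\fq_n)$-invariant. In characteristic $2$ the polar form $\fb_n$ is alternating and so defines a contraction $\partial\colon\bigwedge^kV\to\bigwedge^{k-2}V$; since a totally singular space is totally $\fb_n$-isotropic, every Plücker image is killed by $\partial$, giving $W_{n,k}\subseteq\ker\partial$. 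To find $\dim\ker\partial$ I would use that $\fb_n$ has a one-dimensional radical $\langle z\rangle$ on the odd-dimensional $V$ and split $V=U\perp\langle z\rangle$ with $U$ symplectic of dimension $2n$; then $\partial$ is block-diagonal with blocks the symplectic contractions $\bigwedge^kU\to\bigwedge^{k-2}U$ and $\bigwedge^{k-1}U\to\bigwedge^{k-3}U$, both surjective since the degrees $k,k-1$ lie below the middle dimension $n$ of $\bigwedge^\bullet U$. Hence $\partial$ is onto and $\dim\ker\partial=\binom{2n+1}{k}-\binom{2n+1}{k-2}$, the claimed value. In odd characteristic $\fb_n$ is symmetric and imposes no analogous constraint on exterior powers, so the ambient space is all of $\bigwedge^kV$, of dimension $\binom{2n+1}{k}$.

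The hard part will be the reverse inclusion: that the totally singular $k$-spaces span the whole ambient space — all of $\bigwedge^kV$ when $\FF$ has odd characteristic and the entire $\ker\partial$ when $\mathrm{char}\,\FF=2$. I see two routes. The representation-theoretic one exploits that $W_{n,k}$ is a non-zero $O(V,\fq_n)$-submodule and identifies the ambient space as an irreducible (respectively, a prescribed Weyl) module for $k<n$, forcing $W_{n,k}$ to fill it. The hands-on one fixes a hyperbolic basis $e_1,\dots,e_n,e_0,f_1,\dots,f_n$, writes down enough totally singular spaces — those spanned by subsets of the $e_i$, the spaces $\langle e_i,f_j\rangle$ with $i\ne j$, and suitable isotropic combinations involving $e_0$ — and then shows, by linear algebra and induction on $n$, that their Plücker images generate every required coordinate direction. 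Pinning the span down to be \emph{exactly} $\ker\partial$ in characteristic $2$ is the most delicate point, and I expect the bulk of the work to sit here.

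Finally, for $k=n$ fix a line $\ell_X$ with $X$ an $(n-1)$-dimensional totally singular space. Then $\dim X^{\perp}=n+2$, so $X^{\perp}/X$ is three-dimensional and carries the non-degenerate quadratic form induced by $\fq_n$; its singular points form a conic $\cC$ in $\PG(X^{\perp}/X)$, and the totally singular $n$-spaces $Y$ with $X<Y<X^{\perp}$ are exactly the $Y=\langle X,v\rangle$ with $[v]\in\cC$. Writing $X=\langle x_1,\dots,x_{n-1}\rangle$, the assignment $v\mapsto\varepsilon_n(\langle X,v\rangle)=x_1\wedge\cdots\wedge x_{n-1}\wedge v$ is linear and injective on $X^{\perp}/X$. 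Parametrising $\cC$ as $[s^2:st:t^2]$ and composing with this linear map sends $\ell_X$ to $\{[s^2w_1+st\,w_2+t^2w_3]\}$ with $w_1,w_2,w_3$ linearly independent, that is, to a conic in the plane $\langle w_1,w_2,w_3\rangle\subset\PG(\bigwedge^nV)$. This is precisely the assertion that $\varepsilon_n$ maps lines into conics, and completes the plan.
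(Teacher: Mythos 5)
The paper does not prove this statement: it is quoted verbatim from \cite{IP13} as background, so there is no internal argument to measure you against. Judged on its own, your plan gets the structure right, and the portions you actually execute are sound: projectivity for $k<n$ is indeed inherited from $e_k$ because every line $\ell_{X,Z}$ of $\Delta_{n,k}$ is a full line of $\cG_{2n+1,k}$; the characteristic-$2$ upper bound via the contraction $\partial$ attached to the (now alternating, one-dimensional-radical) form $\fb_n$, together with the splitting $V=U\perp\langle z\rangle$ and the count $\bigl[\binom{2n}{k}-\binom{2n}{k-2}\bigr]+\bigl[\binom{2n}{k-1}-\binom{2n}{k-3}\bigr]=\binom{2n+1}{k}-\binom{2n+1}{k-2}$, is correct; and the $k=n$ argument (pass to the $3$-dimensional quotient $X^{\perp}/X$, observe the induced quadric is a conic, and push it through the injective linear map $v\mapsto x_1\wedge\cdots\wedge x_{n-1}\wedge v$) is complete and valid in all characteristics.

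The genuine gap is that the lower bound --- that $\varepsilon_k(\Delta_{n,k})$ spans all of $\bigwedge^kV$ in odd characteristic and all of $\ker\partial$ in characteristic $2$ --- is never proved; you name two possible strategies and stop. That inclusion is the actual content of the dimension formula (the upper bounds are comparatively formal), so as written the proposal establishes only $\dim W_{n,k}\leq$ the claimed value. Two further points need shoring up if you pursue this. First, the surjectivity of the symplectic contraction $\bigwedge^kU\to\bigwedge^{k-2}U$ for $k\leq n$ is justified by a ``below the middle dimension'' Lefschetz heuristic that is a characteristic-zero argument; over $\FF_q$ it is still true (it is equivalent to the kernel having dimension $\binom{2n}{k}-\binom{2n}{k-2}$, cf.\ \cite{B,PS}), but it requires a proof or a citation, not an appeal to hard Lefschetz. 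Second, the representation-theoretic route is riskier than you suggest: in odd positive characteristic $\bigwedge^kV$ need not be an irreducible $O(2n+1,q)$-module for all $k<n$ and all $p$, so ``$W_{n,k}$ is a nonzero submodule of an irreducible module'' cannot be invoked uniformly; one must either verify irreducibility in the relevant cases or argue that $W_{n,k}$ is not contained in any proper submodule. The explicit hyperbolic-basis computation you mention as the second route is the safer one, and is essentially what \cite{IP13} does.
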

 \begin{theorem}[\cite{B,PS}]
  Let $\overline{\varepsilon}_k:\Delta_{n,k}\to\PG(\bigwedge^kV)$ be the
  restriction of the Pl\"ucker embedding to the symplectic polar
  Grassmannian $\overline{\Delta}_{n,k}$ and let $\overline{W}_{n,k}:=\langle\overline{\varepsilon}_k(\overline{\Delta}_{n,k})\rangle$. Then,
  \begin{itemize}
  \item
     $\overline{\varepsilon}_k:\overline{\Delta}_{n,k}\to\PG(\bigwedge^kV)$
     is projective and
     $\dim(\overline{W}_{n,k})={{2n}\choose{k}}-{{2n}\choose{k-2}}$.
  \end{itemize}
\end{theorem}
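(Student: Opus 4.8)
The plan is to identify the span $\overline{W}_{n,k}$ with the kernel of a contraction operator attached to $\fs$, to compute the dimension of that kernel, and then to show that the Plücker image actually fills it; projectivity will follow formally from the corresponding property of $e_k$.

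Projectivity costs almost nothing. As $\overline{\varepsilon}_k$ is the restriction of the Plücker embedding $e_k$ to the pointset $\cP(\overline{\Delta}_{n,k})\subseteq\cP(\cG_{2n,k})$, injectivity on points is inherited from $e_k$. For the line-preserving property one checks that every line of $\overline{\Delta}_{n,k}$ is already a line of $\cG_{2n,k}$: for $k<n$ a line $\ell_{X,Z}$ with $Z$ totally $\fs$-isotropic is the whole Grassmann pencil $\{Y:X<Y<Z\}$, because each such $Y$ lies in the isotropic space $Z$ and is therefore isotropic; and for $k=n$ a line $\ell_X$ coincides with $\ell_{X,X^{\perp\fs}}$, since $\dim X^{\perp\fs}=n+1$ and the isotropic $n$-spaces through $X$ are exactly those lying between $X$ and $X^{\perp\fs}$. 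Hence $e_k$ sends every line of $\overline{\Delta}_{n,k}$ to a projective line, and $\overline{\varepsilon}_k$ is projective.

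For the dimension, view $\fs$ as an element of $\bigwedge^2V^{*}$ and introduce the contraction operator
\[
\partial_k\colon\bigwedge^kV\to\bigwedge^{k-2}V,\qquad
\partial_k(v_1\wedge\cdots\wedge v_k)=\sum_{i<j}(-1)^{i+j}\fs(v_i,v_j)\,v_1\wedge\cdots\widehat{v_i}\cdots\widehat{v_j}\cdots\wedge v_k .
\]
If $W=\langle v_1,\dots,v_k\rangle$ is totally $\fs$-isotropic then every $\fs(v_i,v_j)$ vanishes, so $\partial_k(v_1\wedge\cdots\wedge v_k)=0$; hence $\overline{W}_{n,k}\subseteq\ker\partial_k$. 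With respect to the natural pairing of $\bigwedge^kV$ with $\bigwedge^kV^{*}$, the operator $\partial_k$ is the adjoint of the wedging map $L\colon\psi\mapsto\fs\wedge\psi$ from $\bigwedge^{k-2}V^{*}$ to $\bigwedge^kV^{*}$, so that $\ker\partial_k=(\operatorname{im}L)^{\perp}$ and $\dim\ker\partial_k=\binom{2n}{k}-\dim(\operatorname{im}L)$. For $k\le n$ the map $L$ is injective, whence $\dim(\operatorname{im}L)=\binom{2n}{k-2}$ and $\dim\ker\partial_k=\binom{2n}{k}-\binom{2n}{k-2}$; this equals the value of Weyl's dimension formula for the $k$-th fundamental module of $\mathrm{Sp}(2n)$, which explains why the answer carries no dependence on the characteristic.

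It remains to prove the reverse inclusion $\ker\partial_k\subseteq\overline{W}_{n,k}$. Passing to annihilators, this is equivalent to the statement that an alternating $k$-form $\phi\in\bigwedge^kV^{*}$ vanishing on every totally isotropic $k$-subspace must lie in $\fs\wedge\bigwedge^{k-2}V^{*}$. One inclusion is immediate, since $\fs\wedge\psi$ restricts to $0$ on an isotropic space. The converse is the crux, as it asserts that the evident linear relations among the Plücker coordinates of isotropic subspaces are the only ones (a first-fundamental-theorem phenomenon for $\mathrm{Sp}(2n)$). I would prove it by fixing a symplectic basis $x_1,\dots,x_n,y_1,\dots,y_n$, expanding $\phi$ in the dual basis, and first using the monomial isotropic spaces $\langle b_{i_1},\dots,b_{i_k}\rangle$ — those involving no conjugate pair $\{x_i,y_i\}$ — to kill all square-free coefficients of $\phi$; the remaining coefficients are then constrained by evaluating $\phi$ on the isotropic subspaces obtained by deforming a basis vector inside a hyperbolic plane $\langle x_i,y_i\rangle$, which lets one isolate a summand $\fs\wedge\psi'$ and induct on $n$. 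The genuine obstacle is to carry this out characteristic-freely. In characteristic $0$, or for $p$ large relative to $n$, there is a clean shortcut: $\ker\partial_k$ is then the irreducible fundamental $\mathrm{Sp}(2n)$-module, so the nonzero $\mathrm{Sp}(2n)$-invariant subspace $\overline{W}_{n,k}$ must fill it. For small $p$ this fails, since $\ker\partial_k$ may be reducible and the scalars governing the $\mathfrak{sl}_2$-action of the pair $(L,\partial_k)$ can vanish modulo $p$; securing the identification in every characteristic is precisely the content of the cited results of~\cite{B,PS}.
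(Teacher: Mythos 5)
The paper does not actually prove this statement: it is imported verbatim from the cited works of De Bruyn and Premet--Suprunenko, so there is no in-paper argument to compare yours against. Judged on its own terms, your outline is a faithful reconstruction of the standard proof and your projectivity argument is complete and correct --- in particular you correctly isolate the point that distinguishes the symplectic case from the orthogonal one at $k=n$, namely that $\ell_X$ coincides with the full Grassmann pencil $\ell_{X,X^{\perp\fs}}$ because $\dim X^{\perp\fs}=n+1$ and every $n$-space between $X$ and $X^{\perp\fs}$ is automatically totally isotropic (this is exactly why the orthogonal analogue only maps $\ell_X$ to a conic). The inclusion $\overline{W}_{n,k}\subseteq\ker\partial_k$ and the adjunction $\ker\partial_k=(\operatorname{im}L)^{\circ}$ are also fine. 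What remains open in your write-up are the two claims that carry essentially all of the content of the theorem: (a) injectivity of $L\colon\psi\mapsto\fs\wedge\psi$ on $\bigwedge^{k-2}V^{*}$ for $k\le n$ \emph{in every characteristic}, which you assert without argument and which is not automatic (it is a Lefschetz-type statement whose naive proofs use invertibility of certain integers); and (b) the reverse inclusion $\ker\partial_k\subseteq\overline{W}_{n,k}$, for which you give only a plan and explicitly defer the characteristic-free case to the literature. Since the theorem itself is a citation, deferring these is defensible, but you should be clear that (a) and (b) are not loose ends of a proof you have given --- they \emph{are} the theorem, and your representation-theoretic shortcut genuinely fails for small $p$, where the Weyl module $\ker\partial_k$ is reducible; that the span of the isotropic Plücker image nonetheless has the stated dimension there is precisely what \cite{B,PS} establish.
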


\subsection{Polar Grassmann Codes}
\label{PGC}
Throughout the paper, we shall denote
by $N$ the length of a linear code and by $K$ its dimension; as before,
lower case letters $n$ and $k$ shall be used to represent the parameters
of the associated Grassmannians.

A $q$-ary code $\cC$ of length $N$ and dimension $K$ is called \emph{projective} if the
columns of its generator matrix are the coordinates of $N$ distinct
points in $\PG(K-1,q)$.
Conversely, given a set of $N$  distinct points
$\Omega=\{P_1,\ldots,P_N\}$ in $\PG(W)$
we call \emph{projective code induced by $\Omega$} any linear
code
 $\cC(\Omega)$ generated by a matrix $G$ whose columns consist
of the coordinates of the points in $\Omega$ with respect to some reference system.
We have $K=\dim\langle\Omega\rangle$.
Clearly,
$\cC(\Omega)$ is defined only up to code equivalence, but in the rest
of this paper we shall speak, with a slight abuse of notation, of
\emph{the} code induced by $\Omega$; see~\cite{tvn}
for more details.

The close correspondence between hyperplane sections of $\Omega$ and the weights of the codewords of $\cC(\Omega)$ is a basic result of the theory of projective codes. In particular,
hyperplanes of $\PG(W)$ having maximal proper intersection with $\Omega$
are associated with codewords of minimum weight.

The projective codes $\cC_{n,k}$
arising from  the pointset
$e_k(\cG_{n,k})\subseteq\PG(\bigwedge^kV)$ are called
Grassmann codes. They have been introduced in
\cite{R1,R2} as a generalization of Reed-Muller codes of the first
order and have been widely investigated ever since: both their
monomial automorphism groups and minimum weights are well understood,
see~\cite{GK2013,GL2001,GPP2009,KP13,N96,R3}.
Codes associated with subsets of Grassmannians have also been studied; see,
for instance \cite{BGH}.
We point out that Grassmannians can also be used to obtain Tanner codes;
see \cite{Pt}.

All these codes have a fairly low rate; as such,  in order to be efficiently
implemented, it is paramount to  provide some encoding and decoding
algorithms acting locally on the components.
To this aim, in~\cite{SE} an enumerative coding scheme for
 Grassmannians is considered and some efficient algorithms are presented; see also~\cite{M12}.
\par
Starting with~\cite{IL13} we have been considering
linear codes arising from the Pl\"ucker embedding of polar Grassmannians
of either orthogonal or symplectic type.
In particular, in~\cite{IL13} a new family of linear codes
 related to the Pl\"{u}cker embedding of polar orthogonal Grassmannians
$\Delta_{n,k}$ has been introduced and some bounds on its minimum distance
have been determined.

In close analogy to orthogonal polar Grassmann codes,  in~\cite{IL15}  we introduced
symplectic polar Grassmann codes, that is codes  arising from
the Pl\"ucker embedding of a symplectic Grassmannian.

Either family of polar Grassmann codes can be
obtained from a Grassmann code $\cC_{2n+1,k}$ or $\cC_{2n,k}$
by just deleting all the columns corresponding respectively
to $k$--spaces which
are non-singular with respect to $\fq$  or non-isotropic with
respect to $\fs$ --- as such they can be regarded in a natural way as
punctured versions of $\cC_{2n+1,k}$ or $\cC_{2n,k}$.
We summarize in the following theorems what is currently known about
the parameters of these codes.
  \begin{theorem}[\cite{IL13},\cite{ILP14},\cite{IL16}]\label{pgc-thm}
    The known parameters $[N,K,d]$ of $\cP_{n,k}:=\cC(\Delta_{n,k})$
    are\vskip-0.1cm
    \[
      \begin{array}{c|c|c|c|c}
        (n,k) & N & K & d & \text{Reference} \\ \hline
        1\leq k<n&  \displaystyle\prod_{i=0}^{k-1}\frac{q^{2(n-i)}-1}{q^{i+1}-1}
                  & \displaystyle\binom{2n+1}{k} & d\geq \widetilde{d}(q,n,k) & \text{\cite{IL13}} \\
        \hline
        (3,3)  & (q^3+1)(q^2+1)(q+1) & 35 & q^2(q-1)(q^3-1) & \text{\cite{IL13}} \\ \hline
    (n,2) & \frac{(q^{2n}-1)(q^{2n-2}-1)}{(q-1)(q^2-1)}  &  (2n+1)n    & q^{4n-5}-q^{3n-4} & \text{\cite{ILP14}}  \\ \hline
      \end{array} \]

   \centerline{\small $q$ odd}
\[
 \begin{array}{c|c|c|c|c}
     (n,k) & N & K & d & \text{Reference} \\ \hline
  1\leq k<n&  \displaystyle\prod_{i=0}^{k-1}\frac{q^{2(n-i)}-1}{q^{i+1}-1}
            &  \displaystyle\binom{2n+1}{k}-\binom{2n+1}{k-2} &
             d\geq \widetilde{d}(q,n,k) & \text{\cite{IL13}}
                                                   \\ \hline
   (3,3)  & (q^3+1)(q^2+1)(q+1) & 28 & q^5(q-1) & \text{\cite{IL13}} \\ \hline
   (n,2) & \frac{(q^{2n}-1)(q^{2n-2}-1)}{(q-1)(q^2-1)}  &  (2n+1)n-1    & q^{4n-5}-q^{3n-4} & \text{\cite{IL16}}  \\ \hline

 \end{array}\] 
   \centerline{\small $q$ even}
{\small
   \[ \widetilde{d}(q,n,k):=(q+1)(q^{k(n-k)}-1)+1 \]
}
\end{theorem}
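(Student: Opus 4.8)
The statement collects three kinds of data — the length $N$, the dimension $K$, and the minimum distance $d$ — and I would establish each in turn. The \emph{length} $N$ is the number of points of $\Delta_{n,k}$, i.e. the number of totally $\fq_n$-singular $k$-subspaces of $V=V(2n+1,q)$, which is the classical count for the parabolic orthogonal polar space of rank $n$. My plan is either to count ordered totally singular $k$-frames and divide by $|\mathrm{GL}_k(q)|$, or to argue recursively: the residue of a singular point is again a parabolic polar space of rank $n-1$, and induction on $k$ produces the product $\prod_{i=0}^{k-1}(q^{2(n-i)}-1)/(q^{i+1}-1)$. The special entries are then just simplifications of this formula; for example $(n,2)$ gives $(q^{2n}-1)(q^{2n-2}-1)/((q-1)(q^2-1))$.

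The \emph{dimension} $K$ for $1\le k<n$ is exactly $\dim W_{n,k}$, which is furnished by the Theorem of~\cite{IP13} quoted above: it equals $\binom{2n+1}{k}$ in odd characteristic and $\binom{2n+1}{k}-\binom{2n+1}{k-2}$ in characteristic $2$. The remaining rows $(3,3)$ and $(2,2)$ are the border case $k=n$, where $\varepsilon_n$ sends lines to conics; here $W_{n,n}$ is no longer all of $\bigwedge^n V$, so its dimension has to be read off from the explicit module structure of the embedding, yielding the listed values $35,28,9$.

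The genuinely difficult part is the \emph{minimum distance} $d$. Using the principle recalled in Section~\ref{PGC}, I would write $d=N-s_{\max}$, where $s_{\max}$ is the maximal number of points of $\Omega:=\varepsilon_k(\Delta_{n,k})$ lying on a hyperplane of $\PG(W_{n,k})$. In the line case $W_{n,2}=\bigwedge^2 V$ in odd characteristic, so a hyperplane is given up to scalar by a nonzero alternating form $\psi$ on $V$, and a singular line $\langle u,w\rangle$ lies on $H_\psi$ precisely when $\psi(u,w)=0$. Thus $|H_\psi\cap\Omega|$ counts the lines that are totally isotropic for \emph{both} the symmetric form $\fb_n$ and the alternating form $\psi$. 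To obtain $s_{\max}$ I would classify the pencils $\lambda\fb_n+\mu\psi$ up to the action of the orthogonal group, compute for each canonical type the number of common totally isotropic lines, and select the maximizing type; subtracting from $N$ should deliver $d=q^{4n-5}-q^{3n-4}$. The general lower bound $\widetilde d(q,n,k)$ in the first row would instead be obtained by exhibiting one explicit hyperplane realizing a large intersection and bounding the intersection of all others from above.

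I expect the pencil classification together with the enumeration of common totally isotropic lines to be the principal obstacle. One must treat separately the subcases where $\psi$ is non-degenerate and where it is degenerate relative to $\fb_n$, track the canonical forms for a symmetric/alternating pair over $\FF_q$ carefully (including the behaviour of the radical of $\psi$ and of the odd-dimensional factor forced by the parabolic type), and finally verify that the extremal configuration genuinely attains the maximum, so that the computation yields the exact value of $d$ and not merely an estimate.
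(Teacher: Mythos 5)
The first thing to note is that the paper does not prove this theorem at all: it is explicitly a summary of known parameters, quoted with references to \cite{IL13} and \cite{ILP14}, so there is no internal proof to measure your plan against. Judged on its own merits, your outline follows essentially the route taken in the cited papers: the length is the classical count of totally singular $k$-subspaces of the parabolic polar space of rank $n$; the dimension for $k<n$ is exactly the embedding-dimension theorem of \cite{IP13} restated earlier in the paper; and for $(n,2)$ with $q$ odd the minimum distance is indeed obtained in \cite{ILP14} by identifying hyperplanes of $\PG(\bigwedge^2 V)$ with nonzero alternating forms $\psi$ and maximizing, over $\psi$, the number of totally singular lines that are also $\psi$-isotropic --- which amounts to an analysis of the pair $(\fb_n,\psi)$, your ``pencil classification'', carried out via the canonical form of the associated endomorphism.

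Two details in your sketch are off. First, for $q$ odd and $(n,k)=(3,3)$ the listed dimension is $35=\binom{7}{3}=\dim\bigwedge^3 V$, so your assertion that $W_{n,n}$ is ``no longer all of $\bigwedge^n V$'' is wrong in odd characteristic; the drop $\binom{2n+1}{n}-\binom{2n+1}{n-2}$ occurs only for $q$ even. Second, your description of how to get the lower bound $d\geq\widetilde d(q,n,k)$ has the logic inverted: exhibiting one hyperplane with a large intersection can only give an \emph{upper} bound on $d$, whereas the stated inequality requires an upper bound on $\#\left(H\cap\varepsilon_k(\Delta_{n,k})\right)$ valid for \emph{every} hyperplane $H$ of the span. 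Neither slip affects the overall viability of the plan, but both would need to be repaired in a written-out proof.
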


 \begin{theorem}[\cite{IL15}]
    The known parameters $[N,K,d]$ of $\cW_{n,k}:=\mathcal{C}(\overline{\Delta}_{n,k})$
    are
      \[
      \begin{array}{c|c|c|c|c}
        (n,k) & N & K & d & \text{Reference} \\
              \hline
1<k\leq n &  \prod_{i=0}^{k-1}(q^{2n-2i}-1)/(q^{i+1}-1) & {2n\choose k}-{2n\choose{k-2}} & & \text{\cite{IL15}} \\ \hline
 (n,2) &  \frac{(q^{2n}-1)(q^{2n-2}-1)}{(q-1)(q^2-1)} & n(2n-1)-1 & q^{4n-5}-q^{2n-3} & \text{\cite{IL15}} \\ \hline
(3,3)  &  (q^3+1)(q^2+1)(q+1) & 14  & q^6-q^4 & \text{\cite{IL15}} \\\hline
      \end{array} \]
\end{theorem}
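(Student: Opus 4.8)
The plan is to regard $\cW_{n,k}$ as the projective code induced by the pointset $\Omega:=\overline{\varepsilon}_k(\overline{\Delta}_{n,k})\subseteq\PG(\overline{W}_{n,k})$ and to read off each parameter from the geometry: the length $N=|\Omega|$ is the number of totally $\fs$-isotropic $k$-subspaces of $\overline{V}$, the dimension is $K=\dim\overline{W}_{n,k}$, and by the standard correspondence between hyperplane sections and codeword weights the minimum distance is $d=N-\max_H|\Omega\cap H|$, with $H$ ranging over the hyperplanes of $\PG(\overline{W}_{n,k})$. I would establish these three in turn, the last being by far the most demanding.

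For the length I would count ordered bases of totally isotropic subspaces. Extending a totally isotropic $i$-space $U_i$ by one vector amounts to choosing an element of $U_i^{\perp}\setminus U_i$, which can be done in $q^{2n-i}-q^i$ ways; hence there are $\prod_{i=0}^{k-1}(q^{2n-i}-q^i)$ such ordered bases. Dividing by $|GL_k(q)|=\prod_{i=0}^{k-1}(q^{k}-q^{i})$ and cancelling the common factor $\prod_i q^i$ yields exactly $\prod_{i=0}^{k-1}(q^{2n-2i}-1)/(q^{i+1}-1)$, which collapses to $(q^{2n}-1)(q^{2n-2}-1)/((q-1)(q^2-1))$ when $k=2$. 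The dimension is then immediate from the embedding theorem quoted above, $K=\binom{2n}{k}-\binom{2n}{k-2}$, specialising to $n(2n-1)-1$ for $k=2$ and to $14$ for $(n,k)=(3,3)$.

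The core of the argument is the minimum distance, which I would treat first for $k=2$. Identifying $(\bigwedge^2\overline{V})^*$ with the space $\bigwedge^2\overline{V}^*$ of alternating forms on $\overline{V}$, and using that $\Omega$ spans the hyperplane $\{p:\langle\fs,p\rangle=0\}$ (whose dimension matches $\dim\overline{W}_{n,2}$ by the embedding theorem), the nonzero codewords are put in bijection with classes $[\phi]$ of alternating forms modulo $\langle\fs\rangle$. The value of the codeword $c_\phi$ at $\ell=\langle u,v\rangle$ is $\phi(u,v)$, so $c_\phi$ vanishes at $\ell$ precisely when $\ell$ is isotropic for the whole pencil $\langle\fs,\phi\rangle$; therefore
\[
 d=N-\max_{[\phi]\ne0}\bigl|\{\ell\in\overline{\Delta}_{n,2}:\ell\text{ totally isotropic for }\phi\}\bigr|.
\]
Since $\fs$ is nondegenerate, the pencil is encoded by the $\fs$-self-adjoint endomorphism $T:=\fs^{-1}\phi$, characterised by $\phi(u,v)=\fs(Tu,v)$ and satisfying $\fs(Tu,v)=\fs(u,Tv)$; the pairs $(\fs,\phi)$ are classified up to $\mathrm{Sp}(\overline{V})$ by the elementary divisors of $T$. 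As $\mathrm{Sp}(\overline{V})$ induces automorphisms of $\cW_{n,2}$, the intersection number $|\Omega\cap H_\phi|$ is constant on each such orbit, which reduces the maximisation to a finite list of normal forms.

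I would then compute the pencil-isotropic line count for each normal form and isolate the extremal one. This last step is where I expect the real difficulty to lie: the normal-form theory of $\fs$-self-adjoint operators is intricate (eigenvalues in $\FF_q$, irreducible quadratic blocks, and nontrivial Jordan structure all contribute), and for each form one must count the common totally isotropic lines inside and between the $T$-invariant summands, producing an explicit polynomial in $q$. Maximising over the orbits should single out the most degenerate family, for which this count equals $N-q^{4n-5}+q^{2n-3}$, whence $d=q^{4n-5}-q^{2n-3}$. The maximal case $(n,k)=(3,3)$ falls outside this $k=2$ dictionary, since there the codewords are arbitrary functionals on the $14$-dimensional module $\overline{W}_{3,3}$ rather than alternating forms; it would be handled by the analogous but separate analysis of the hyperplane sections of the embedded variety of totally isotropic planes under the $\mathrm{Sp}(6,q)$-action, yielding $d=q^6-q^4$.
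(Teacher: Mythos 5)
First, a remark on context: the paper itself contains no proof of this statement --- it is a survey theorem quoted from \cite{IL15} --- so there is no internal argument to measure yours against. Judged on its own, your proposal is correct and complete for the first two parameters. The count of ordered bases of totally isotropic $k$-spaces, $\prod_{i=0}^{k-1}(q^{2n-i}-q^i)$, divided by $|\mathrm{GL}_k(q)|=\prod_{i=0}^{k-1}(q^k-q^i)$, does simplify to the stated $N$, and $K$ is read off directly from the embedding theorem of De Bruyn and Premet--Suprunenko already quoted in Section~1.1. Your framework for the minimum distance is also the right one and is essentially the strategy of \cite{IL15}: identify nonzero codewords with alternating forms $\phi$ modulo $\langle\fs\rangle$, note that $\wt(c_\phi)=N-\#\{\ell\in\overline{\Delta}_{n,2}:\phi|_\ell=0\}$, and reduce the maximisation to orbits of the $\fs$-self-adjoint operator $T=\fs^{-1}\phi$ under $\mathrm{Sp}(2n,q)$.

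The genuine gap is that the maximisation itself --- which is the entire content of the third column of the table --- is never carried out. The sentence ``Maximising over the orbits should single out the most degenerate family, for which this count equals $N-q^{4n-5}+q^{2n-3}$'' asserts the answer rather than deriving it. To close the argument you must (i) exhibit an explicit orbit of forms $\phi$ achieving $\#\{\ell:\phi|_\ell=0\}=N-q^{4n-5}+q^{2n-3}$, and (ii) prove that no other orbit exceeds this, which requires actually controlling the contributions of eigenvalues of $T$ in $\FF_q$, of irreducible quadratic invariant factors, and of nontrivial Jordan structure to the count of lines totally isotropic for the whole pencil. Neither is done, and the case analysis is nontrivial (it occupies the bulk of \cite{IL15}). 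The same deferral occurs for $(n,k)=(3,3)$, where you correctly observe that the $k=2$ dictionary with alternating forms breaks down but then only state that an ``analogous but separate analysis'' would yield $q^6-q^4$. As written, the proposal establishes $N$ and $K$ and reduces $d$ to a well-posed but unsolved finite classification problem; it does not prove the minimum distance claims.
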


\subsection{Organization of the paper and Main Results }\label{organization}
 In Section~\ref{p2} we recall the notion of prefix enumeration and describe counting algorithms
for the points of both $\Delta_{n,2}$ and $\overline{\Delta}_{n,2}$.
In particular, in \S~\ref{s2} we consider
the number of totally $\fq$-singular lines of $V$
spanned by  vectors with a prescribed prefix, while in \S~\ref{w2} we investigate
the totally $\fs$-isotropic lines of $\overline{V}$.
The complexity of the prefix enumerators is discussed in \S~\ref{s4}.
\begin{theorem}
\label{t15}
For an orthogonal line Grassmannian, the computational complexity for determining the number of
  points whose representation
  begins with a prescribed prefix is $O(n^2)$.

For a symplectic line Grassmannian, the computational complexity for determining the number of
  points whose representation
  begins with a prescribed prefix is $O(n)$.
\end{theorem}
These results are used in
Section~\ref{s3} to present
an enumerative coding scheme according to the approach of~\cite{Cover}.
In \S~\ref{compl} we
analyze the overall complexity of our enumerative encoding scheme.
\begin{theorem}\label{main thm}
The computational complexity of the point enumerator of an orthogonal line Grassmannian is $O(q^2n^3)$.
The computational complexity of the point enumerator of a symplectic line Grassmannian is $O(q^2n^2)$.
\end{theorem}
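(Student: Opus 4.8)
The plan is to read the complexity straight off the enumerative scheme of Section~\ref{s3}, which implements Cover's prefix method. To rank a totally singular (resp.\ isotropic) line one fixes its canonical representative --- a $2\times(2n+1)$ (resp.\ $2\times 2n$) matrix in reduced row echelon form --- and scans its entries in order, adding at each position the number of admissible lines whose representative agrees with the given one on the already-processed prefix but carries a strictly smaller symbol in the current slot. The running time is therefore (number of scanned positions) $\times$ (cost of one completion-count), up to precomputation that will be absorbed in the bounds. Since $k=2$, the canonical matrix has $O(n)$ entries in both geometries, so the number of positions is $O(n)$; the whole argument reduces to bounding the per-position counting cost.

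For $\overline{\Delta}_{n,2}$ I would invoke the counters established in~\ref{w2}. As $\fs$ is alternating, a line $\langle u,v\rangle$ is isotropic precisely when $\fs(u,v)=0$, the conditions $\fs(u,u)=\fs(v,v)=0$ being automatic. Consequently the number of isotropic completions of a prefix is a short ratio of the symplectic polar counters of~\ref{w2}, and fixing one further column updates that ratio through a constant number of field operations. Each position thus costs $O(1)$ and the symplectic algorithm runs in $O(n)$.

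The orthogonal case carries the genuine difficulty and explains the jump to $O(n^2)$. Singularity of $\langle u,v\rangle$ for $\fq_n$ forces the two quadratic conditions $\fq_n(u)=\fq_n(v)=0$ together with $\fb_n(u,v)=0$; and --- most delicately when $q$ is even, where $\fb_n$ is itself alternating and the absolute trace $\mathrm{Tr}_2$ intervenes --- the number of singular completions of a prefix is not a single ratio but depends on the \emph{type} of the quadratic form induced on the coordinates still free. I would prove that identifying this type and evaluating the matching singular-point count of~\ref{s2} costs $O(n)$ per position, the count being a sum whose very shape changes along the scan and which therefore resists the constant-size incremental update available in the symplectic setting. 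Multiplying $O(n)$ positions by the $O(n)$ per-position cost yields $O(n^2)$.

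The main obstacle is exactly this orthogonal per-position bound: one has to check that type determination together with the evaluation of the associated completion count never exceeds $O(n)$ elementary operations, uniformly over all prefixes and for both parities of $q$. Once the explicit formulas of~\ref{s2} and~\ref{w2} are granted, both complexity statements follow from the product (positions) $\times$ (per-position cost) isolated above.
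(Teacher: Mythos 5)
There is a genuine gap, and it begins with what is actually being bounded. The theorem, as the paper proves it in Section~\ref{s4} and in the complexity discussion at the end of Section~\ref{w2}, bounds the cost of a \emph{single} evaluation of the prefix-counting functions $n_{\fq}(S,n)$ and $n_{\fs}(S,n)$. The full Cover-style ranking that you analyze is treated separately in Section~\ref{s3}, where it is shown to cost $O(q^2n^3)$ in the orthogonal case and $O(q^2n^2)$ in the symplectic case, precisely because each of the $O(n)$ positions requires up to $q^2-1$ fresh prefix counts. Your decomposition (positions) $\times$ (per-position cost) therefore bounds a different quantity, and it reproduces the numbers $O(n^2)$ and $O(n)$ only because you assert per-position costs of $O(n)$ and $O(1)$ that are not established.

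The assertion that matters is the orthogonal per-position cost, and it is not correct as stated: a single call to $n_{\fq}(S,n)$ already costs $O(n^2)$, not $O(n)$. Moreover, the quadratic cost does not come from ``identifying the type'' of the residual quadratic form --- in the coordinates of~\eqref{equation quadratic form} the residual form is always the hyperbolic form $\sum x_{2i}x_{2i+1}$, and evaluating $\fq_n(A)$, $\fq_n(B)$, $\fb_n(A,B)$ together with the relevant discriminant or absolute trace costs only $O(t)$. It comes from a recursion your argument never mentions: for a prefix of odd length $t$ with both rows nonzero (cases B.\ref{B1} and B.\ref{B3}), the algorithm must also evaluate $n_{\fq}(S,n-1)$ for the \emph{same} prefix, each level of this recursion costs $O(t)=O(n)$, and its depth is up to $n-(t+1)/2$, so that the total is $\sum O(t)=O(n^2)$. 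Any proof of the theorem has to set up and unroll this recursion (the paper's inequality $\kappa(n_{\fq}^O(S,n))\leq 3t+\kappa(n_{\fq}^O(S,n-1))$); without it you obtain neither the upper bound nor an explanation of why the orthogonal case is genuinely worse than the symplectic one, where no recursion in $n$ occurs and the only nonconstant cost is the single $O(t)$ evaluation of $\fs(A,B)$.
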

Section~\ref{s5} is dedicated to applications of the scheme
introduced in Section~\ref{s3} to orthogonal and symplectic line polar Grassmann codes.
We propose some encoding/decoding and
error correction strategies which act locally on the components of
the codewords.

\section{Prefix enumeration}
\label{p2}
In this section we shall present an algorithm  to count
the number of points of a line polar Grassmannian whose representation
satisfies certain conditions.
This will be
essential for the enumerative encoding algorithm of Section~\ref{s3}.

\subsection{Preliminaries}
\label{sec2.1}
In order to simplify the exposition, in this section we shall slightly alter the notation introduced in
Section~\ref{Notation}. For $\varepsilon=0,1$, let
$V^{\varepsilon}:=V(2n+\varepsilon,q)$ be a vector space of dimension $2n+\varepsilon$ over $\FF_q$  and let
$B_{\varepsilon}$ be a fixed basis of $V^{\varepsilon}$.
So, according to Section~\ref{Notation}, $V^0:=\overline{V}$ and
$V^1:=V$.
Up to projectivities,
there is exactly one class of non-degenerate
quadratic forms on $V^1$; hence,
it is not restrictive to
choose the following quadratic form $\fq$:
\begin{equation}\label{equation quadratic form}
 \fq(\mathbf{x})=x_1^2+\sum_{i=1}^{n}x_{2i}x_{2i+1},
\end{equation}
where $x=(x_i)_{i=1}^{2n+1}$.
The associated bilinear form $\fb$ is
  \[ \fb({x},{y}):=2x_1y_1+\sum_{i=1}^{n}\left(
    x_{2i}y_{2i+1}+y_{2i}x_{2i+1}\right), \]
where ${x}=(x_i)_{i=1}^{2n+1}$, ${y}=(y_i)_{i=1}^{2n+1}$.
Note that, for $q$ even, the form $\fb$ is alternating
and degenerate, while for $q$ odd $\fb$ is non-degenerate and
symmetric. We will denote by $\mathcal{Q}$ the non-degenerate parabolic quadric of $\PG(V)$ defined by $\fq.$

If $\varepsilon=0$, consider the following
non-degenerate
symplectic form $\fs:\overline{V}\times \overline{V}\to\FF_q$,
\begin{equation}\label{fseq} \fs({x},{y})=\sum_{i=1}^{n} (x_{2i-1}y_{2i}-y_{2i-1}x_{2i}) \end{equation}
where ${x}=(x_i)_{i=1}^{2n}$, ${y}=(y_i)_{i=1}^{2n}$.
We will denote by $\mathcal{W}$ the non-degenerate symplectic polar space of $\PG(\overline{V})$ defined by $\fs.$

Recall that a $(2\times t)$-matrix $G$ is
said to be in \emph{Hermite normal form} or in \emph{row reduced
echelon form} (RREF, in brief) if it is in row-echelon form, the leading
non-zero entry of each row is $1$ and all entries above a leading entry
are $0$.
For each line $\ell$  of $\PG(V^{\varepsilon})$,
there are two uniquely determined vectors $X,Y\in\FF_q^{2n+\varepsilon}$ such that $\ell=\langle X,Y\rangle$
and $G_{\ell}:=\begin{pmatrix} X\\Y\end{pmatrix}$ is a
$2\times (2n+\varepsilon)$-matrix in RREF. We call $G_{\ell}$ the \emph{representation} of $\ell$.

We remind that a line $\ell=\langle X,Y\rangle$ of $\PG(V^1)$ is said to be
\emph{totally $\fq$-singular} if $\fq(X)=\fq(Y)=0=\fb(X,Y)$.
Likewise, a line $\ell=\langle X,Y\rangle$ of $\PG(V^0)$ is
\emph{totally $\fs$-isotropic} if $\fs(X,Y)=0$.
\begin{table}
\caption{Useful numbers}
\[
\begin{array}{l|l|l}
\Xi & |\Xi|_1:=|\{\text{points of}\ \Xi \}| & |\Xi|_2:=|\{\text{lines of}\ \Xi \}| \\
\hline
\PG(v,q) & \frac{(q^{v+1}-1)}{q-1}& \frac{(q^{v}-1)(q^{v+1}-1)}{(q^2-1)(q-1)}\\
\hline
Q(2v,q) & \frac{(q^{2v}-1)}{q-1} &  \frac{(q^{2v-1}-1)(q^{2v}-1)}{(q^2-1)(q-1)} \\
\hline
Q^+(2v-1,q) & \frac{(q^{v}-1)(q^{v-1}+1)}{q-1} & \frac{(q^{2v-2}-1)(q^{v}-1)(q^{v-1}+1)}{(q^2-1)(q-1)}\\
\hline
W(2v-1,q) & \frac{q^{2v}-1}{q-1} & \frac{(q^{2v}-1)(q^{2v-2}-1)}{(q-1)(q^2-1)} \\
\end{array}\]
\label{unum}
\end{table}

 Denote by ${\fR}_{2,t}$ the set of all
$(2\times t)$-matrices over $\FF_q$ and also let
\[ \fR_{2}^{\varepsilon}:=\bigcup_{t=0}^{2n+\varepsilon}\fR_{2,t}. \]
with $\varepsilon\in\{0,1\}$ and $\fR_{2,0}:=\{\emptyset\}$.
\noindent\par
For $1\leq t\leq 2n+\varepsilon$, let $\displaystyle S_t=\begin{pmatrix}A_t
\\
B_t\end{pmatrix}\in \fR_2^{\varepsilon}$
with $A_t:=(\alpha_1,\alpha_2,\ldots,\alpha_t)$ and $B_t:=(\beta_1,\beta_2,\ldots,\beta_t)$
and
put $\hA:=(A_t,x_{t+1},x_{t+2},\ldots,x_{2n+\varepsilon})$ and $\hB:=(B_t,y_{t+1},y_{t+2},\ldots,y_{2n+\varepsilon}).$

Then we say that $S_t$ is the \emph{t-prefix} or the
\emph{t-leading part} of the $2\times(2n+\varepsilon)$-matrix $\begin{pmatrix} \hA\\ \hB\end{pmatrix}$.
The \emph{length} of $S_t$ is the number $t$ of its columns.
\par\noindent
We shall also define the following two vectors of $V^{\varepsilon}$: $A:=(A_t,0,\ldots,0)$ and
$B:=(B_t,0,\ldots,0)$.
\begin{definition}
\label{nqns}
Let
\begin{itemize}
\item
  $n_{\fq}:\fR_2^1\times\NN\to\NN$ be the function sending any $(S_t,n)\in \fR_2^1\times\NN$ to  the number of totally $\fq$-singular lines of $\PG(2n,q)$
 whose representation in RREF has prefix $S_t$;
\item
   $n_{\fs}:\fR_2^0\times\NN\to\NN$ be the function sending any $(S_t,n)\in \fR_2^0\times\NN$ to the number of totally $\fs$-isotropic lines of $\PG(2n-1,q)$
 whose representation in RREF has prefix $S_t$.
\end{itemize}
If $S_t$ is not in RREF, we have $n_{\fq}(S_t,n)=0$ and $n_{\fs}(S_t,n)=0$ for any $n\in \NN.$
Henceforth, we shall always silently assume that $S_t$ is given in RREF.
\end{definition}

\begin{definition}\label{t-REF}
We say that a $(2\times r)$-matrix
\[ S_r=\begin{pmatrix}
  \alpha_1 & \ldots & \alpha_t & \ldots & \alpha_r \\
  \beta_1 & \ldots & \beta_t   & \ldots & \beta_r
  \end{pmatrix}\]
 is in
\emph{$t$-Row Echelon Form} (in brief $t$-REF) if one of the following two
conditions holds
\begin{enumerate}[a)]
  \item $\alpha_t=\beta_t=0$ and $S_r$ is in RREF, or
  \item $\alpha_t=0$ or $\beta_t=0$ but $(\alpha_t,\beta_t)\neq (0,0)$,
    $S_r$ is in row-echelon form and the leading non-zero entry in each row is $1$.
\end{enumerate}
\end{definition}
Note that, in general, a matrix in $t$-REF is not in RREF.
Indeed, given a $(2\times r)$-matrix $S$ in RREF,
if either $\alpha_t=0$ or $\beta_t=0$, then
$S$ is already also in $t$-REF;
otherwise, when $\beta_t\neq0$,
we can always
subtract from the first row of $S$ the second row multiplied by
$\lambda=\alpha_t\beta_t^{-1}(\neq 0)$  to get a new matrix
\begin{equation}
\label{SpC}
 S'=\begin{pmatrix}
  \alpha_1-\lambda\beta_1 & \ldots & 0 & \ldots & \alpha_{r}-\lambda\beta_r \\
    \beta_1 & \ldots &  \beta_t & \ldots & \beta_r
    \end{pmatrix} \text{\rm in\,\, $t$-REF.}
\end{equation}
It is now easy to see that, for any line $\ell$ and any $1\leq t\leq 2n+\varepsilon$, there exists
exactly one matrix in $t$-REF whose rows span $\ell$.

For $q=2^s$ denote by $\mathrm{Tr}_2(x)$ the absolute trace of $x\in\FF_q$,
 that is $\mathrm{Tr}_2(x):=\sum_{i=0}^{s-1} x^{2^i}.$

\subsection{Enumerating orthogonal Grassmannians}
\label{s2}
In this section we shall compute the enumerating function
$n_{\fq}$ introduced in Definition~\ref{nqns}.
If $S_t$ ($1\leq t\leq 2n+1$) is not in RREF, then $n_{\fq}(S_t,n)=0$ for all $n$.
Suppose now $(S_t,n)\in\fR_{2}^{1}\times\NN$ with $S_t\in\fR_{2,t}^{1}$
in RREF.
The value $n_{\fq}(S_t,n)$ is
the number of solutions in the unknowns $x_i$ and $y_i$,
$i=t+1,\dots, 2n+1$, of the system of quadratic equations
\begin{equation}
\label{mains}
\begin{cases}
  \fq(\hA)=0 \\
  \fq(\hB)=0 \\
  \fb(\hA,\hB)=0.
\end{cases}
\end{equation}
The first step of the algorithm is to transform $S_t$ in
$t$-REF, see Definition~\ref{t-REF}.

We will distinguish two
cases, depending on the parity of $t$.
These cases will not be fully
independent:
as it will be seen, our algorithm for $t$ even requires some computations
with some auxiliary prefixes of odd length and, likewise,
some cases with a prefix of odd length are dealt with
by reducing to different cases where the prefix has an even number of columns.
In any case, as the analysis shall show, this
will not lead to an infinite recursion and will ultimately provide the
correct value without explicitly solving~\eqref{mains}.

\subsubsection{Even $t$}\label{even-t}
If $t=0$, then
$n_{\fq}(\emptyset,n)=\frac{(q^{2n-1}-1)(q^{2n}-1)}{(q^2-1)(q-1)}$
 is the number of the (totally singular) lines of
$\mathcal{Q}$;  see Table~\ref{unum}.
For $t>0$ even, System~\eqref{mains} can be explicitly written as follows.
\begin{equation}
\setlength{\nulldelimiterspace}{0pt}
\label{e3-even}
\begin{cases}
{\displaystyle
\alpha_1^2 + \sum_{i=1}^{t/2-1}\alpha_{2i}\alpha_{2i+1}+\alpha_tx_{t+1}+\!\!\!\sum_{i=t/2+1}^{n}\!\!x_{2i}x_{2i+1}=0}\\
{\displaystyle \beta_1^2 + \sum_{i=1}^{t/2-1}\beta_{2i}\beta_{2i+1}+\beta_ty_{t+1}+\!\!\!\sum_{i=t/2+1}^{n}\!\!y_{2i}y_{2i+1}=0}\\
2\alpha_1\beta_1+ \alpha_ty_{t+1}+\beta_tx_{t+1}+\!{\displaystyle\sum_{i=1}^{t/2-1}}(\alpha_{2i}\beta_{2i+1} +\alpha_{2i+1}\beta_{2i})+
\smash{\!\!\!{\displaystyle\sum_{i=t/2+1}^{n}}\!\!(x_{2i}y_{2i+1} +x_{2i+1}y_{2i})=0.}
  \end{cases}
\end{equation}
We will compute the number of solutions of the system~(\ref{e3-even}) in the unknowns $x_i$, $y_i$, for $t+1\leq i,j\leq 2n+1$.  We distinguish several cases.
\begin{enumerate}[\mbox{A.}1)]
\item\label{A1q}
\fbox{$\alpha_t=0$ and $\beta_t\neq 0$.}
The second and third equations of~\eqref{e3-even} are linear in respectively
$y_{t+1}$ and $x_{t+1}$ and
the coefficient of $y_{t+1}$ is non-zero.
 So, for any choice of $(y_{t+2},\ldots,y_{2n+1})\in\FF_q^{2n-t}$, the value of
$y_{t+1}$ is uniquely determined; there are $q^{2n-t}$ possibilities. Similarly, the third equation directly
provides the value of $x_{t+1}$ once $x_{t+2},\ldots,x_{2n+1},$ satisfying the
first equation, are given. Hence, there remains to study the number of solutions of the first equation,
which can be written as
\begin{equation}
\label{e3p1p}
\fq(A)+\!\!\!\sum_{i=t/2+1}^{n}\!\!x_{2i}x_{2i+1}=0.
\end{equation}
As $S_t$ is in $t$-REF, we have $A_t=(\alpha_1,\ldots,\alpha_{t-1},0)\neq\mathbf{0}$;
i.e. there is $i<t$ such that $\alpha_i\neq 0$.
Since $B_t=(\beta_1,\ldots,\beta_t)\neq\mathbf{0}$, any vector solution of~\eqref{e3p1p}, with
arbitrary choices of $y_{t+2},\ldots,y_{2n+1}$ gives different lines.
Call $\widehat{\eta_0}(A)$ the number of solutions of~\eqref{e3p1p}.
If $\fq(A)=0$, then $\widehat{\eta_0}(A)$ is the number
of vectors $(x_{t+2},\ldots,x_{2n+1})\in \FF_q^{2n-t}$ satisfying $\fq^+(x_{t+2},\ldots,x_{2n+1})=0$ where
\[ \fq^+(x_{t+2},\ldots,x_{2n+1}):=\sum_{i=t/2+1}^n x_{2i}x_{2i+1}. \]
Hence $\widehat{\eta_0}(A)$ is $(q-1)$ times the number of points of a hyperbolic quadric $\cQ^+$ in $\PG(2n-t-1,q)$. If $\fq(A)\neq0$, then $\widehat{\eta_0}(A)$ is the number of
vectors $(x_{t+2},\ldots,x_{2n+1})\in \FF_q^{2n-t}$ such that
\[ \fq^+(x_{t+2},\ldots,x_{2n+1})=-\fq(A). \]
If $q$ is odd, then  the form $\fq^+$ has
the same quadratic character as $-\fq(A)$
for half of the points of $\PG(2n-t-1,q)$ not in  $\cQ^+$;
each of these points
contributes $2$ vector solutions of~\eqref{e3p1p}. The points with
quadratic character different from that of $-\fq(A)$ do not contribute
any solution.
Thus, $\widehat{\eta_0}(A):=\eta_0(\fq(A))$, where
\begin{equation}\label{eta_0}
 \eta_0(c):=\begin{cases}
  (q-1)|Q^+(2n-t-1,q)|_1+1 & \text{if $c=0$}\\
2\cdot\frac{1}{2}\left(|\PG(2n-t-1,q)|_1- |Q^+(2n-t-1,q)|_1\right) & \text{if $c\neq 0$}, \\
  \end{cases}
\end{equation}
 that is, by Table~\ref{unum}:
 \[\eta_0(c)=\begin{cases}
(q-1)\cdot\frac{(q^{n-t/2}-1)(q^{n-t/2-1}+1)}{q-1}+1 & \text{if $c=0$} \\[6pt]
  \frac{q^{2n-t}-1}{q-1}-\frac{(q^{n-t/2}-1)(q^{n-t/2-1}+1)}{q-1}
  &
\text{if $c\neq 0$}.\\
\end{cases} \]
For $q$ even, an analogous argument, where we consider the absolute trace $\mathrm{Tr}_2(\fq(A))$ of $\fq(A)$ instead of
its quadratic character, leads to the same formula~\eqref{eta_0}.

Finally,
\[ n_{\fq}(S_t,n)=\underbrace{|\{\text{solutions to \eqref{e3p1p}}\}|}_{\text{possibilities for } x_{t+2},\ldots,x_{2n+1}}
\times
{\underbrace{q^{2n-t}}_{{\begin{minipage}{1.7cm}\tiny\begin{centering} possibilities for \\ $y_{t+2},\ldots,y_{2n+1}$\end{centering}\end{minipage}}}}\!\!\!\!\!=\eta_0(\fq(A))\cdot q^{2n-t}.
 \]

\item\label{A2q} \fbox{$\alpha_t\neq 0$ and $\beta_t=0$.} This case is analogous
to A.\ref{A1q} with the roles of the first and the second equation
reversed. The only difference is for $B=\mathbf{0}$.
Indeed,
\begin{enumerate}[\mbox{A.\ref{A2q}}.1)]
\renewcommand{\theenumi}{\relax}
 \item\label{A21} for $B\neq\mathbf{0}$ and $\beta_t=0$,
   we argue exactly as in~A.\ref{A1q} and
   $n_{\fq}(S_t,n)=\eta_0(\fq(B))\cdot q^{2n-t}$.
 \item\label{A22} for $B=\mathbf{0}$ we first count the number of points
   of the hyperbolic quadric having equation
   $\fq^+(y_{t+2},\ldots,y_{2n+1}):=y_{t+2}y_{t+3}+\ldots+y_{2n}y_{2n+1}=0$.
Let $\ell=\langle\hA,\hB\rangle$ be any line with $\hB$ given by the previous equation and denote by $i>t$ the index of the first non-zero component $y_i$ of $\hB$. Then,  $\begin{pmatrix}\hA-x_iy_i^{-1}\hB \\ \hB\end{pmatrix}$ is the representative matrix of $\ell$ in RREF. In particular, $x_i=0$ and the $t$-prefix of this matrix is the same as that of $\begin{pmatrix}\hA\\ \hB\end{pmatrix}$.
   So, there are $q^{2n-t-1}$ possibilities
   for $x_{t+1},\ldots,x_{2n+1}$. Thus,
   \[ n_{\fq}(S_t,n):=\frac{q^{2n-t-1}}{q-1}(q^{n-t/2}-1)(q^{n-t/2-1}+1). \]
\end{enumerate}
\item \fbox{$\alpha_t=\beta_t=0$.}\label{A3q}
In this case the matrix $G=\begin{pmatrix}\hA\\ \hB\end{pmatrix}$ has the form
\[ G=\begin{pmatrix}
  \alpha_1 & \ldots & \alpha_{t-1} & 0 & x_{t+1} & \ldots x_{2n+1} \\
  \beta_1 & \ldots & \beta_{t-1} & 0 & y_{t+1} & \ldots y_{2n+1} \\
\end{pmatrix}.
\]
As the coefficients of  $x_{t+1}$ and $y_{t+1}$ in the equations of~(\ref{e3-even}) are both zero, the
system~\eqref{e3-even} is formally
the same as the system defined by
\[ G'=\begin{pmatrix}
  \alpha_1 & \ldots & \alpha_{t-1} &  x_{t+2} & \ldots x_{2n+1} \\
  \beta_1 & \ldots & \beta_{t-1} &  y_{t+2} & \ldots y_{2n+1} \\
\end{pmatrix}.
\]
We shall call ``reduced'' this new system, where the unknowns $x_{t+1}$ and $y_{t+1}$ have
been removed.
It is straightforward to see that for each solution of the reduced system
 there are $q^2$ solutions of~\eqref{e3-even}, being $x_{t+1}$ and $y_{t+1}$ arbitrary.
  Note that the number of solutions
 of the reduced system is $n_{\fq}(S_{t-1},n-1)$ where
 \[ S_{t-1}:=\begin{pmatrix}
    \alpha_1 & \alpha_2 & \ldots & \alpha_{t-1} \\
    \beta_1 & \beta_2 & \ldots & \beta_{t-1}
\end{pmatrix}. \]

We now consider  three subcases:
\begin{enumerate}[\mbox{A.\ref{A3q}.}1)]
\renewcommand{\theenumi}{\relax}
\item\label{A3p1}
\fbox{$A_t=B_t=\mathbf{0}$.} In this case,
 $n_{\fq}(S_t,n)$ is the number of lines contained in the parabolic quadric $\cQ$ defined
 by $\fq$ and in
 the subspace $\Pi$ of codimension $t$ described by the equations
 \[ x_1=0, x_2=0, \ldots, x_{t}=0. \]
As
 $\cQ':=\Pi\cap\cQ$ is a cone of vertex
$W=(\overbrace{0,0,\ldots,0}^{t},1,0,\ldots,0)$ and
basis the hyperbolic quadric $\cQ^+$ of $\PG(2n-t-1,q)$ with equation
 \[ \begin{cases}
   x_1=x_2=\ldots=x_t=0 \\
   x_{t+2}x_{t+3}+x_{t+4}x_{t+5}+\cdots+x_{2n}x_{2n+1}=0,
   \end{cases}\]
we have $n_{\fq}(S_t,n)=\sigma q^2+|\cQ^+|_1$, where $\sigma= |\cQ^+|_2$ is the number of lines of $\cQ^+$;
see Table~\ref{unum}.
\item\label{A3p2} \fbox{$A_t\neq\mathbf{0}$ and
  $B_t=\mathbf{0}$.} In this case,
  $n_{\fq}(S_t,n):=
  q^2n_{\fq}(S_{t-1},n-1)+\sigma_1$, where
  $\sigma_1$ corresponds to the number of solutions of~\eqref{e3-even} which do not
  arise from solutions of the reduced system. This happens only if the second row of
  $G'$ is null, but the second row of $G$ is not.
  That is,
  \[ G=\begin{pmatrix}
     A_t&  0   & {x}_{t+2} & \ldots & {x}_{2n+1} \\
    \mathbf{0} &   1   & 0 & \ldots & 0 \\
       \end{pmatrix}. \]
  Thus, $\sigma_1=\eta_0(\fq(A))$, see~\eqref{eta_0}.
\item\label{A3p3} \fbox{$A_t\neq\mathbf{0}$ and
  $B_t\neq\mathbf{0}$.} In this case,
  $n_{\fq}(S_t,n)=q^2n_{\fq}(S_{t-1},n-1)$ and we apply a recursive argument (see Case~B.\ref{B1} in \S~\ref{odd-t}).
  \end{enumerate}
\end{enumerate}
Recall that   $A_t=\mathbf{0}$ and
$B_t\neq\mathbf{0}$ cannot occur as the matrix $S_t$ is in $t$-REF.

Computing the value of $n_{\fq}(S_t,n)$ when
$(\alpha_t,\beta_t)=(0,0)$ has thus been reduced to
 determining $n_{\fq}(S_{t-1},n-1)$, where the
number of columns of $S_{t-1}$ is odd and the number of unknowns is $2n-t$.

\subsubsection{Odd $t$}
\label{odd-t}
For $t$ odd System~\eqref{mains} can be explicitly written as follows.
\begin{equation}
\setlength{\nulldelimiterspace}{0pt}
\label{e3}
\begin{cases}
{\displaystyle
 \alpha_1^2 +\!\!\!\sum_{i=1}^{(t-1)/2}\!\!\alpha_{2i}\alpha_{2i+1}+\!\!\!\!\!\sum_{i=(t+1)/2}^{n}\!\!\!\!x_{2i}x_{2i+1}=0}\\
\displaystyle\beta_1^2 +\!\!\!\sum_{i=1}^{(t-1)/2}\!\!\beta_{2i}\beta_{2i+1}+\!\!\!\!\!\sum_{i=(t+1)/2}^{n}\!\!\!y_{2i}y_{2i+1}=0\\
2\alpha_1\beta_1+\!\!\!{\displaystyle\sum_{i=1}^{(t-1)/2}}\!\!(\alpha_{2i}\beta_{2i+1} +\alpha_{2i+1}\beta_{2i}) +
\!\!\!{\displaystyle \sum_{i=(t+1)/2}^{n}}\!\!(x_{2i}y_{2i+1} +x_{2i+1}y_{2i})=0.\\
  \end{cases}
\end{equation}
As in Section~\ref{sec2.1}, let
$S_t=\begin{pmatrix}
A_t \\
B_t\\
  \end{pmatrix}$
be in RREF with $A_t=(\alpha_1,\ldots,\alpha_t)$ and $B_t=(\beta_1,\ldots,\beta_t)$.
We first replace the $(2\times t)$-matrix $S_t$ with the $2\times (t+1)$-matrix $S_{\gamma,\delta}$
obtained from $S_t$ by adding the column $\begin{pmatrix}\gamma\\ \delta\end{pmatrix}$, with $\gamma,\delta\in\FF_q$, i.e. $S_{\gamma,\delta}=\begin{pmatrix}
   A_t & \gamma \\
  B_t & \delta
  \end{pmatrix}.$
By Definition~\ref{nqns}, $n_{\fq}(S_{\gamma,\delta},n)=0$ if $S_{\gamma,\delta}$ is not in RREF. Hence
\[ n_{\fq}(S_t,n)=\sum_{(\gamma,\delta)\in\FF_q^2}n_{\fq}(S_{\gamma,\delta},n). \]
We distinguish several cases.
\begin{enumerate}[\mbox{B.}1)]
\item\label{B1}
\fbox{$A_t\neq\mathbf{0}$ and $B_t\neq\mathbf{0}$.}
We need to compute the values $n_{\fq}(S_{\gamma,\delta},n)$  where $S_{\gamma,\delta}$ has an even number  $t+1$ of columns. More precisely,
\[ n_{\fq}(S_t,n)=\!\!\!\!\sum_{\tiny{\begin{array}{l}
(\gamma, \delta)\in \FF_q^2\\
\gamma\not=0\not=\delta
\end{array}}
}\!\!\!\!\!\!n_{\fq}(S_{\gamma, \delta}, n)+\!\!\!\! \sum_{\tiny{
\gamma \in \FF_q\setminus \{0\}
}}\!\!\!\!n_{\fq}(S_{\gamma, 0}, n)+\!\!\!\! \sum_{\tiny{
\delta\in \FF_q\setminus \{0\}}
}\!\!\!\!n_{\fq}(S_{0, \delta}, n)+ n_{\fq}(S_{0, 0}, n).\]


\begin{enumerate}[\mbox{B.\ref{B1}.}1)]
\renewcommand{\theenumi}{\relax}
  \item\fbox{{$\displaystyle{\!\!\!\!\sum_{\tiny{\begin{array}{l}
(\gamma, \delta)\in \FF_q^2\\
\gamma\not=0\not=\delta
\end{array}}
}\!\!\!\!n_{\fq}(S_{\gamma, \delta}, n).}$}}
   Put
  $\lambda=\delta^{-1}\gamma$. We have
    $n_{\fq}(S_{\gamma,\delta},n)=n_{\fq}(S_{t+1},n)$ where
   \[S_{t+1}:=\begin{pmatrix}
      \alpha_1-\lambda\beta_1 & \ldots & \alpha_t-\lambda\beta_t & 0 \\
      \beta_1 & \ldots & \beta_{t} & \delta \\
      \end{pmatrix}.\]

    As $S_{t+1}$ is in $(t+1)$-REF, we are lead back to
    Case~A.\ref{A1q} of \S~\ref{even-t}. Thus,
      \[n_{\fq}(S_{t+1},n)=q^{2n-t-1} \eta_1(c),\]
      with $\eta_1(c)$ the number of solutions of the equation
      $\fq(A-\lambda B)=0$, as $\lambda$ varies in $\FF_q\setminus\{0\}$.
      If $c:=\fq(A-\lambda B)$, then (see also~\eqref{eta_0})
      \begin{equation}\label{eta_1}
     \eta_{1}(c):=\begin{cases}
 (q-1)|Q^+(2n-t-2,q)|_1+1\hskip-1cm\phantom{,}&
         \text{for $c=0$} \\
         |\PG(2n-t-2,q)|_1-|Q^+(2n-t-2,q)|_1&
         \text{for $c\neq 0$}, \\
       \end{cases}
     \end{equation}
 that is, by Table~\ref{unum}:
\[
  \eta_1(c)=\begin{cases}
(q-1)\cdot\frac{(q^{n-(t+1)/2}-1)(q^{n-(t+3)/2}+1)}{q-1}+1 & \text{if $c=0$} \\[6pt]
2\cdot\frac{1}{2}\left(\frac{q^{2n-t-1}-1}{q-1}-\frac{(q^{n-(t+1)/2}-1)(q^{n-(t+3)/2}+1)}{q-1}\right) &
\text{if $c\neq 0$}.\\
\end{cases}
\]

   We have
        \begin{equation}
          \label{eqxi}
          c=\fq(A)-\lambda\fb(A,B)+\lambda^2\fq(B).
        \end{equation}
        Let now $\xi$ be the number of non-zero solutions
        of~\eqref{eqxi} in the unknown $\lambda$.
        The possible values assumed by $\xi$ are
        outlined in Table~\ref{xiodd}
        for $q$ odd and in Table~\ref{xieven} for $q$ even.
        For $q$ odd, the symbols $\square$ and $\not\!\square$
        represent respectively the set of all non-zero square elements
        and the set of non-square elements in $\FF_q$.
\begin{table}[h]
\begin{minipage}[t]{5.5cm}
\caption{Number $\xi$ of solutions of  $\fq(A)-\lambda\fb(A,B)+\lambda^2\fq(B)=0$ for
$q$ odd.}
\[
\begin{array}{l|l|l|l||l}
  \fq(A) & \fb(A,B) & \fq(B) & \Delta  & \xi \\ \hline
  0     &  0   &   0   &    0    & q-1 \\
  0     &\neq0 & \neq0 &    *    & 1   \\
  0     &  0   & \neq0 &    0    & 0   \\
  0     &\neq0 &   0   &    *    & 0   \\
  \neq0 & 0    &   0   &    *    & 0   \\
  \neq0 &\neq0 &   0   &    *    & 1   \\
  \neq0 & \text{Any}   & \neq0 & \square & 2   \\
  \neq0 & \text{Any}   & \neq0 &     0   & 1   \\
  \neq0 & \text{Any}   & \neq0 & \not\!\square\, & 0
\end{array}
\]
\[\Delta:=\fb(A,B)^2-4\fq(A)\fq(B). \]
\centerline{ $*$ means that there are no conditions on $\Delta.$}
\label{xiodd}
\end{minipage}\hfill
\begin{minipage}[t]{5.5cm}
\caption{Number $\xi$ of solutions of \hfill\  $\fq(A)+\lambda\fb(A,B)+\lambda^2\fq(B)=0$ for $q$ even.}
\[
\begin{array}{l|l|l|l||l}
  \fq(A)  & \fb(A,B) & \fq(B) &   \Theta  & \xi \\ \hline
  0     &  0   &   0   &    *    & q-1 \\
  0     &\neq0 & \neq0 &    *    & 1   \\
  0     & 0    & \neq0 &    *    & 0   \\
  0     &\neq0 &   0   &    *    & 0   \\
  \neq0 &    0 &   0   &    *   & 0   \\
  \neq0 &\neq0 &   0   &    *    & 1   \\
  \neq0 & 0    & \neq0 &    *    & 1   \\
  \neq0 & \neq0   & \neq0 & 0  & 2   \\
  \neq0 & \neq0   & \neq0 & 1  & 0   \\
\end{array}
\]
\[\Theta:=\mathrm{Tr}_2\left(\frac{\fq(A)\fq(B)}{\fb(A,B)^2}\right). \]
\begin{centering}
 $*$ means that there are no conditions on $\Theta$ or $\Theta$ does not exist.
\end{centering}
\label{xieven}
\end{minipage}
\end{table}
Hence,
       \[  \sum_{\tiny{\begin{array}{l}
(\gamma, \delta)\in \FF_q^2\\
\gamma\not=0\not=\delta
\end{array}}
}n_{\fq}(S_{\gamma, \delta}, n)=
        \underbrace{(q-1)}_{\text{cases for } \delta}q^{2n-t-1}
         \big(\!\!\!\!\!
         \underbrace{\xi \eta_{1}(0)}_{\begin{subarray}{c}
             \text{ first eq.} \\ \text{homogeneous}
           \end{subarray}}+
           \underbrace{(q-1-\xi)\eta_{1}(1)}_{
             \begin{subarray}{c}\text{first eq.} \\
               \text{nonhomogeneous}\end{subarray}}
           \big).\]
  \item\fbox{$\displaystyle{\sum_{\tiny{
\gamma \in \FF_q\setminus \{0\}
}}n_{\fq}(S_{\gamma, 0}, n)+\sum_{\tiny{
\delta\in \FF_q\setminus \{0\}
}}n_{\fq}(S_{0, \delta}, n)}$.}
 Suppose $\gamma=0$ and $\delta\not=0.$ Arguing as in Case A.\ref{A1q} of
 \S~\ref{even-t}, we see that
 $n_{\fq}(S_{0,\delta},n)=n_{\fq}(S_{0,1},n)$ for any $\delta\neq 0$.
 Hence,
\[ \sum_{
\gamma \in \FF_q\setminus \{0\}
}n_{\fq}(S_{\gamma, 0}, n)=(q-1)n_{\fq}(S_{0,1},n), \]
where the factor $n_{\fq}(S_{0,1},n)$ can be directly computed as in Case A.\ref{A1q}.

The case $\gamma\neq0$ and $\delta=0$ is analogous to case A.2.\ref{A21};
hence,
\[ \sum_{
\delta\in \FF_q\setminus \{0\}
}n_{\fq}(S_{0, \delta}, n)=(q-1)n_{\fq}(S_{1,0},n). \]

\item\fbox{$n_{\fq}(S_{0,0},n)$.} In this case,
  $n_{\fq}(S_{0,0},n)=q^2n_{\fq}(S_t,n-1)$ as
  $x_{t+2}$ and $y_{t+2}$ may be chosen arbitrarily and $n_{\fq}(S_t,n-1)$ is the number of
  solutions of the system associated with
  \[ G'=\begin{pmatrix}
   \alpha_1 & \alpha_2 & \ldots & \alpha_t & x_{t+3} & \ldots & x_{2n+1} \\
   \beta_1 & \beta_2 & \ldots & \beta_t & y_{t+3} & \ldots & y_{2n+1}
    \end{pmatrix}. \]
  \end{enumerate}
\item \fbox{$A_t=B_t=\mathbf{0}$.}\label{B2}
  An argument analogous to Case~A.\ref{A3q}.\ref{A3p1} of
  \S~\ref{even-t}  shows that
  we have to determine the number
  of lines of a hyperbolic quadric $\cQ^+$ in $\PG(2n-t,q)$ with equation
  \[ x_{t+1}x_{t+2}+\ldots+x_{2n}x_{2n+1}=0; \]
  we refer to Table~\ref{unum} for the actual value.
\item \fbox{$A_t\neq\mathbf{0}$ and $B_t=\mathbf{0}$.} \label{B3}
In this case all the $2\times (t+1)$-matrices $S_{\gamma,\delta}$ are of the form
  \[ S_{\gamma,\delta}=\begin{pmatrix}
    \alpha_1 & \alpha_2 & \ldots & \alpha_t & \gamma \\
    0 & 0 & \ldots & 0 & \delta \\
    \end{pmatrix}. \]
When $S_{\gamma, \delta}$ is taken in $(t+1)$-REF, either $\delta=1$ and $\gamma=0$ or
    $\delta=0$ and $\gamma$ is arbitrary.
    Note that for any solution of the first equation of \eqref{e3} there
    are $q-1$ vector solutions of the second equation yielding the same line.
In this case,
\[ n_{\fq}(S_t,n)=  n_{\fq}(S_{0, 1}, n)+\!\!\! \sum_{
\gamma \in \FF_q\setminus \{0\}
}\!\!\!n_{\fq}(S_{\gamma, 0}, n)+ n_{\fq}(S_{0, 0}, n).\]

  \begin{enumerate}[\mbox{B.\ref{B3}.}1)]
\renewcommand{\theenumi}{\relax}
    \item\fbox{$n_{\fq}(S_{0, 1}, n).$}
      We can compute $n_{\fq}(S_{0,1},n)$ using the same approach as in Case~A.\ref{A1q}.
  \item\fbox{$\displaystyle{\sum_{\tiny{
\gamma \in \FF_q\setminus \{0\}
}}\!\!\!n_{\fq}(S_{\gamma, 0}, n).}$} We have
  $S_{\gamma,0}=\begin{pmatrix}
    \alpha_1 & \alpha_2 & \ldots & \alpha_t & \gamma \\
    0 & 0 & \ldots & 0 & 0
    \end{pmatrix}.$
   With an approach similar to Case~A.\ref{A2q}.\ref{A22} we see that
    \[ n_{\fq}(S_{\gamma,0},n)=n_{\fq}(S_{1,0},n)=q^{2n-t-2}\cdot\frac{(q^{n-(t+1)/2}-1)(q^{n-(t+3)/2}+1)}{q-1}. \]
    Hence, this case contributes $q^{2n-t-2}(q^{n-(t+1)/2}-1)(q^{n-(t+3)/2}+1)$ to $n_{\fq}(S_t,n)$.

\item\fbox{$n_{\fq}(S_{0, 0}, n)$.}\label{B33}
 We need to compute $n_{\fq}(S_{0,0},n)$, i.e. the number of solutions of the following system
 in the unknowns $x_{t+2},\ldots,x_{2n+1},y_{t+2},\ldots,y_{2n+1}$:
          \begin{equation}\label{S9p} \begin{cases}
            \fq(A)+0x_{t+2}+x_{t+3}x_{t+4}+\ldots+x_{2n}x_{2n+1} = 0 \\
            0y_{t+2}+y_{t+3}y_{t+4}+\ldots+y_{2n}y_{2n+1}=0 \\
            0y_{t+2}+x_{t+2}0+x_{t+3}y_{t+4}+x_{t+4}y_{t+3}+\ldots+x_{2n+1}y_{2n}=0.
      \end{cases}
      \end{equation}
      We shall refer to the system in the unknowns $x_{t+3},\ldots,x_{2n+1},y_{t+3},\ldots,y_{2n+1}$ obtained
      from~\eqref{S9p} by removing the unknowns $x_{t+2}$ and $y_{t+2}$ as the  ``reduced'' one.
      With arguments similar to those of  Case~A.\ref{A3q}.\ref{A3p2} we see that each solution of the reduced system
      corresponds to $q^2$ solutions of~\eqref{S9p}.
      However, there are also solutions of~\eqref{S9p} not arising from the reduced system.
      These solutions correspond to cases in which $y_{t+2}\neq0$ and $y_{t+3}=\cdots=y_{2n+1}=0$; that is, the representative matrix of the totally singular lines considered in these cases is
      \[ \begin{pmatrix}
       \alpha_1 & \alpha_2 & \ldots & \alpha_t & 0 & 0 & x_{t+3} & \ldots & x_{2n+1} \\
       0 & 0 & \ldots & 0 & 0 & 1 & 0 & \ldots & 0
       \end{pmatrix}; \]
       there are $\eta_1(\fq(A))$ possibilities; see~\eqref{eta_1}.
       Hence,
  \[ n_\fq(S_{0,0},n)=q^2n_{\fq}(S_t,n-1)+\eta_1(\fq(A)). \]

\end{enumerate}
\end{enumerate}
 The case $A=\mathbf{0}$ and
  $B\not=\mathbf{0}$ cannot happen according to the convention we adopted.

The above arguments provide a complete description of how to compute
the function $n_{\fq}(S_t,n)$ for any $S_t\in\fR^{1}_2$ and $n\in\NN$.
We summarize the details of the algorithm in Table~\ref{tabQ}.
\begin{table}
\caption{Enumerator for Orthogonal Line Grassmannians}
\[ \begin{array}{c|c|c|c|c}
S_t=\begin{pmatrix} A_t\\ B_t\end{pmatrix} & {t} & \text{Case} & n_\fq(S,n) & \text{Complexity} \\ \hline
\emptyset & 0 & &
\begin{array}[c]{c}
\\
\frac{(q^n-1)(q^{n+1}-1)}{(q^2-1)(q-1)}\\
\\
\end{array}  & O(1)
 \\ \hline
\begin{array}[c]{c}
\begin{pmatrix}
\alpha_1 & \alpha_2 & \ldots & \alpha_{t-1} & 0 \\
\beta_1 & \beta_2 & \ldots & \beta_{t-1} & \beta_{t}
\end{pmatrix} \\
\beta_t\neq0
\end{array} & \text{Even} & A.\ref{A1q} & q^{2n-t}\eta_0(\fq(A)) & O(t) \\ \hline
\begin{array}[c]{c}
\begin{pmatrix}
\alpha_1 & \alpha_2 & \ldots & \alpha_{t-1} & \alpha_t \\
\beta_1 & \beta_2 & \ldots & \beta_{t-1} & 0
\end{pmatrix} \\
\alpha_t\neq0 \\
(\beta_1,\ldots,\beta_{t-1})\neq\mathbf{0}
\end{array} & \text{Even} &  A.\ref{A2q}.\ref{A21} & q^{2n-t}\eta_0(\fq(B)) & O(t) \\ \hline
\begin{array}[c]{c}
\begin{pmatrix}
\alpha_1 & \alpha_2 & \ldots & \alpha_{t-1} & \alpha_t \\
0 & 0 & \ldots & 0 & 0
\end{pmatrix} \\
\alpha_t\neq0 \\
\end{array} & \text{Even} &  A.\ref{A2q}.\ref{A22} & \frac{q^{2n-t-1}}{q-1}(q^{n-t/2}-1)(q^{n-t/2-1}+1) & O(1) \\ \hline
\begin{array}[c]{c}
\begin{pmatrix}
0 & 0 & \ldots & 0 & 0 \\
0 & 0 & \ldots & 0 & 0
\end{pmatrix} \\
\end{array} & \text{Even} & A.\ref{A3q}.\ref{A3p1} & \sigma q^2+(q^{n-t/2}-1)(q^{n-t/2-1}+1) & O(1) \\ \hline
\begin{array}[c]{c}
\begin{pmatrix}
\alpha_1 & \alpha_2 & \ldots & \alpha_{t-1} & 0 \\
0 & 0 & \ldots & 0 & 0
\end{pmatrix} \\
(\alpha_1,\ldots,\alpha_{t-1})\neq\mathbf{0} \\
\end{array} & \text{Even} & A.\ref{A3q}.\ref{A3p2} &
\begin{array}[c]{c}
q^2n_{\fq}^O(S_{t-1},n-1)+\eta_0(\fq(A));\\
\text{\rm with }n_{\fq}^O(S_{t-1},n-1)\ \text{\rm as in Case }B.\ref{B3}\\
\end{array}
& O(n^2) \\ \hline
\begin{array}[c]{c}
\begin{pmatrix}
\alpha_1 & \alpha_2 & \ldots & \alpha_{t-1} & 0 \\
\beta_1 & \beta_2 & \ldots & \beta_{t-1} & 0
\end{pmatrix} \\
(\alpha_1,\ldots,\alpha_{t-1})\neq\mathbf{0} \\
(\beta_1,\ldots,\beta_{t-1})\neq\mathbf{0} \\
\end{array} & \text{Even} & A.\ref{A3q}.\ref{A3p3} &
\begin{array}[c]{c}
q^2n_{\fq}^O(S_{t-1},n-1);\\
\text{\rm with }n_{\fq}^O(S_{t-1},n-1)\ \text{\rm as in Case }B.\ref{B1}\\
\end{array}
 & O(n^2) \\ \hline
\begin{array}[c]{c}
\begin{pmatrix}
\alpha_1 & \alpha_2 & \ldots & \alpha_{t-1} & \alpha_t \\
\beta_1 & \beta_2 & \ldots & \beta_{t-1} & \beta_t
\end{pmatrix} \\
(\alpha_1,\ldots,\alpha_{t-1},\alpha_t)\neq\mathbf{0} \\
(\beta_1,\ldots,\beta_{t-1},\beta_t)\neq\mathbf{0} \\
\end{array} & \text{Odd} & B.\ref{B1} &
\begin{array}[c]{c}
\\
q^{2n-t-1}\eta_1(c)+(q-1)n_{\fq}^E(S_{0,1},n)+\\
(q-1)n_{\fq}^E(S_{1,0},n))+ q^2n_{\fq}^O(S_t,n-1);\\
\text{\rm with } n_{\fq}^E(S_{0,1},n)\ \text{\rm as  in Case }A.\ref{A1q}\,\text{\rm and}\\
n_{\fq}^E(S_{1,0},n)\ \text{\rm as in Case }A.\ref{A2q}.\ref{A1q}.\\
\\
\end{array}
& O(n^2) \\ \hline
\begin{array}[c]{c}
\begin{pmatrix}
0 & 0 & \ldots & 0 & 0 \\
0 & 0 & \ldots & 0 & 0
\end{pmatrix} \\
\end{array} & \text{Odd} & B.\ref{B2}  &
\frac{(q^{2n-t-1}-1)(q^{n-(t-1)/2}-1)(q^{n-(t-3)/2}+1)}{(q^2-1)(q-1)}
  & O(1) \\ \hline
\begin{array}[c]{c}
\begin{pmatrix}
\alpha_1 & \alpha_2 & \ldots & \alpha_{t}  \\
0 & 0 & \ldots & 0
\end{pmatrix} \\
(\alpha_1,\ldots,\alpha_{t})\neq\mathbf{0} \\
\end{array} & \text{Odd} & B.\ref{B3} &
\begin{array}[c]{c}
\\
n_{\fq}^E(S_{0,1},n)+n_{\fq}^E(S_{1,0},n)+\\
q^2n_{\fq}^E(S_t,n-1)+\eta_1(\fq(A));\\
{\rm with\, } n_{\fq}^E(S_{0,1},n) {\rm\,as\,\, in\,\,Case\,\,}A.\ref{A1q}\,{\rm and}\\
n_{\fq}^E(S_{1,0},n) {\rm\,as\,\, in\,\,Case\,\,}A.\ref{A2q}.\ref{A2q}.\\
\\
 \end{array}
 & O(n^2) \\ \hline
     \end{array} \]
\label{tabQ}
\centerline{
For the meaning of the symbols and the constants, see the relevant Cases A.x and B.x. in \S~\ref{even-t} and \S~\ref{odd-t}.}
\end{table}

\subsubsection{Complexity}
\label{s4}
We now analyze the complexity of the algorithm described in \S~\ref{even-t} and \S~\ref{odd-t}.

Given a $(2\times t)$-matrix $S_t=\begin{pmatrix} A_t\\B_t\end{pmatrix}$ in RREF and $n\in\NN$,
we shall denote by  $n^E_{\fq}(S_t,n)$  the output of the algorithm
with $t$ even (see \S~\ref{even-t}) and by  $n^O_{\fq}(S_t,n)$ the output with $t$ odd (see
\S~\ref{odd-t}).
We will write $\kappa(n_{\fq}(S_t,n))$ for the number of multiplications required to compute
$n_{\fq}(S_t,n)$.
The complexity of the various steps of the algorithm will now be  examined.

\noindent {\sc Step 1.} If $S_0=\emptyset$ or we are in the hypotheses of cases A.\ref{A2q}.\ref{A22}, A.\ref{A3q}.\ref{A3p1} or
 B.\ref{B2}, then we can provide the value of $n_{\fq}(S_t,n)$ by directly applying a formula with
fixed complexity $O(1)$;
 see also Table~\ref{tabQ}.
 \par
\noindent {\sc Step 2.} Otherwise, transform $S_t$ in $t$-REF; this requires $t$ products and $t$ sums. \par
\noindent {\sc Step 3.} If $t$ is even, compute $n_{\fq}^E(S_t,n)$; otherwise compute $n_{\fq}^O(S_t,n)$. \par

Clearly,
 \[ \kappa(n_{\fq}(S_t,n))\leq t+\max\{\kappa(n_{\fq}^E(S_t,n)),\kappa(n_{\fq}^O(S_t,n))\}. \]
 We shall now analyze in detail $\kappa(n_{\fq}^E(S_t,n))$ and $\kappa(n_{\fq}^O(S_t,n))$.

\begin{itemize}
    \item {\fbox{$n_{\fq}^E(S_t,n)$.}}
        \begin{enumerate}
         \item If $S_t$ is as in Case A.\ref{A1q}, then we need to evaluate $\fq(A)$; this
         requires $t$ products and $t$ sums; thus it has complexity $O(t)$. Likewise, under the
         assumptions of A.\ref{A2q}.\ref{A21}, the complexity to determine $n_{\fq}(S_t,n)$ is $O(t)$.
         \item If $S_t$ satisfies the hypotheses of A.\ref{A3q}.\ref{A3p2} or A.\ref{A3q}.\ref{A3p3}, we need
         to consider the complexity of $n_{\fq}^O(S_{t-1},n-1)$ where $S_{t-1}$ is a $2\times (t-1)$-matrix obtained
         from $S_t$ by deleting its last column. Then we need to consider
         a case of odd length $n_{\fq}^O(S_{t-1},n-1)$. As it will be shown below, the complexity here is
         at most $O(n^2)$.
         \end{enumerate}
    \item {\fbox{$n_{\fq}^O(S_t,n)$.}}
    We claim that
    \begin{equation}
     \label{rece}
     \kappa(n_{\fq}^O(S_t,n))\leq 3t+\kappa(n_{\fq}^O(S_t,n-1)).
     \end{equation}
    By Cases B.\ref{B1} and B.\ref{B3}, computing $n_{\fq}^O(S_t,n)$ requires to determine the values of
    $n_{\fq}^E(S_{\gamma,\delta},n)$ for $(\gamma,\delta)=(0,1)$, $(\gamma,\delta)=(1,0)$ and
    $\gamma\neq0\neq\delta$ (Case B.\ref{B1})) and the value of $n_{\fq}^O(S_t,n-1)$ (Case B.\ref{B3}). The first three cases have already been
    shown to have complexity at most $O(t)$. Hence, the claim follows.
    \par
    Observe that $\kappa(n_{\fq}^O(S_t,\frac{t-1}{2}))=3\frac{t-1}{2}$, since, in this case, we just need to
    check if the line spanned by $A$ and $B$ is totally singular. Note that $n_{\fq}^O(S_t,\frac{t-1}{2})$ is
    the number of totally singular lines of $\PG(t-1,q)$ whose representative matrix is $S_t.$
    Clearly, this number is $1$ if the line spanned by the rows of $S_t$ is singular and $0$ otherwise.
    By recursively applying~\eqref{rece}, since $t\leq 2n+1$, we have
    \begin{multline*}
      \kappa(n_{\fq}^O(S_t,n))\leq 3t+\kappa(n_{\fq}^O(S_t,n-1))+O(1)
      \leq 6t+\kappa(n_{\fq}^O(S_t,n-2))+O(1)\leq\ldots \\
      \leq 3\sum_{i=1}^{n-(t+1)/2}t+\kappa(n_{\fq}^O(S_t,\frac{t-1}{2}))+O(1)\leq O(n^2).
    \end{multline*}
\end{itemize}
    In summary, the complexity of the algorithm to determine $n_{\fq}(S_t,n)$ is $O(n^2)$. This proves Theorem~\ref{t15} for the orthogonal line Grassmannian.

\subsection{Enumerating Symplectic Grassmannians}
\label{w2}
Following Definition~\ref{nqns}, given any $S_t=\begin{pmatrix} A_t\\B_t\end{pmatrix}\in\fR_{2}^0$,
where $A_t=(\alpha_1,\ldots,\alpha_t)$, $B_t=(\beta_1,\ldots,\beta_t)$, $1\leq t\leq 2n,$
denote by
$n_{\fs}(S_t,n)$  the number of totally
$\fs$-isotropic lines of $\PG(2n-1,q)$ spanned by $\widehat{A}=(\alpha_1,\ldots,\alpha_t,x_{t+1},\ldots,x_{2n})$
and $\widehat{B}=(\beta_1,\ldots,\beta_t,y_{t+1},\ldots,y_{2n})$ as $x_{t+1},\ldots,x_{2n},y_{t+1},\ldots,y_{2n}$
vary.
Also, put $A=(A_t,0,\ldots, 0)$ and $B=(B_t,0,\ldots, 0)$. We remind that $S_t$
is assumed to be in RREF. (Otherwise, $n_{\fs}(S_t,n)=0$).
In this section we compute $n_{\fs}(S_t,n)$, with an approach similar to that of Section~\ref{s2}.
We have to determine the number of solutions of the equation $\fs(\hA,\hB)=0$ in the unknowns
$x_{i},y_{i}$ for $t+1\leq i\leq 2n$, see Definition~\ref{t-REF}.
The first step of the algorithm is to transform $S_t$ in $t$-REF using~\eqref{SpC}.

Let $\fs'$ be the alternating
form induced by the restriction of $\fs$ to the subspace of $\overline{V}$ of equation
$x_1=x_2=\cdots=x_t=0$. We distinguish two subcases.
\subsubsection{Even $t$}\label{Sym-even-t}
There are three possibilities:
\begin{enumerate}[\mbox{C.}1)]
\item \framebox{$A_t=B_t={\mathbf{0}}$.}\label{C1}
  In this case, $n_{\fs}(S_t,n)$ is the number of totally $\fs'$-isotropic lines
  in a subspace $\PG(2n-t-1,q).$
  Thus (see Table~\ref{unum}),
  \[ n_{\fs}(S_t,n)=\frac{(q^{2n-t}-1)(q^{2n-t-2}-1)}{(q-1)(q^2-1)}. \]
  \item \framebox{$A_t\neq\mathbf{0}$, $B_t=\mathbf{0}$.}\label{C2}
    In this case, the representative matrix of the lines we consider has the form
     \[G=\begin{pmatrix} \alpha_1 & \ldots & \alpha_t & x_{t+1} & \ldots & x_{2n} \\
                                          0   & \ldots & 0 & y_{t+1} & \ldots & y_{2n}
                                          \end{pmatrix}. \]
     Suppose $Y=(y_{t+1},\ldots,y_{2n})$ is a given non-null vector with leading coefficient
     $y_i=1$, $i>t$.
     There are $\frac{q^{2n-t}-1}{q-1}$ choices for $Y$.
     For any such $Y$ we count the number of vectors $X=(x_{t+1},\ldots,x_{2n})$ with
     $x_i=0$ such that $\fs'(X,Y)=0$: this amounts to $q^{2n-t-2}$.
     Hence,
     \[ n_{\fs}(S_t,n)=q^{2n-t-2}\frac{q^{2n-t}-1}{q-1}. \]
  \item \framebox{$A_t\neq\mathbf{0}$, $B_t\neq\mathbf{0}$.}\label{C3}
  We distinguish
    two subcases, according to the value of $\fs(A,B)$.
    \begin{enumerate}[\mbox{C.\ref{C3}}.1)]
    \renewcommand{\theenumi}{\relax}
      \item \framebox{$\fs(A,B)=0$.}\label{C31}
        In this case we count the number of
        pairs of vectors $(X,Y)$ with
        $X,Y\in \FF_q^{2n-t}$ and $\fs'(X,Y)=0$.
        If $X={\mathbf0}$, then there are $q^{2n-t}$ different choices
        for $Y$ such that $\fs'(X,Y)=0$. If $X\neq{\mathbf0}$,
        there are $q^{2n-t-1}$ choices for $Y$ such that $\fs'(X,Y)=0$.
        Thus,
        \begin{equation}
          \label{eA0}
          n_{\fs}(S_t,n)=q^{2n-t-1}(q^{2n-t}-1)+q^{2n-t}.
        \end{equation}
      \item \framebox{$\fs(A,B)\neq 0$.}\label{C32}
        Let $X=(x_{t+1},\ldots,x_{2n})$ be a fixed non-null vector.
        There are $q^{2n-t}-1$ choices for such $X$.
        We count the number of vectors $Y=(y_{t+1},\ldots,y_{2n})$
        such that
        $\fs'(X,Y)=-\fs(A,B)$.
        This is a linear equation in the unknowns $y_{t+1},\ldots,y_{2n}$;
        hence, there are $q^{2n-t-1}$ choices for $Y$. Thus,
        \begin{equation}
          \label{eA1}
          n_{\fs}(S_t,n)=(q^{2n-t}-1)q^{2n-t-1}.
        \end{equation}
      \end{enumerate}
\end{enumerate}
\subsubsection{Odd $t$}
\label{Sym-odd-t}
When $t$ is odd, we first replace the matrix
$S_t=\begin{pmatrix}
    A_t\\
   B_t
    \end{pmatrix}\in \fR_a^0$, with a $2\times (t+1)$-matrix
    $S_{\gamma,\delta}:= \begin{pmatrix}
    A_t & \gamma \\
    B_t & \delta
    \end{pmatrix}$
    obtained from $S_t$ by adding  the column
    $\begin{pmatrix}\gamma\\ \delta\end{pmatrix}$, with $\gamma, \delta\in \FF_q$.
    By Definition~\ref{nqns}, $n_{\fs}(S_{\gamma,\delta},n)=0$ if $S_{\gamma,\delta}$ is not in RREF. Hence
\[ n_{\fs}(S_t,n)=\sum_{(\gamma,\delta)\in\FF_q^2}n_{\fs}(S_{\gamma,\delta},n). \]
   We distinguish three subcases.
\begin{enumerate}[\mbox{D.}1)]
  \item \framebox{$A_t=B_t={\mathbf{0}}$.}\label{D1}
 In this case,
 $n_{\fs}(S_t,n)$ is the number of lines contained in the symplectic polar space  $\cW$ defined
 by $\fs$ and in
 the subspace $\Pi$ of codimension $t$ described by the equations
 \[ x_1=0, x_2=0, \ldots, x_{t}=0. \]
As
 $\Pi\cap\cW$ is a degenerate symplectic polar space with radical of dimension $1$,
we have $n_{\fs}(S_t,n)=\sigma q^2+|\cW'|_1$, where
$\cW'$ is a non-degenerate symplectic polar space
in $\PG(2n-t-2,q)$ and
$|\cW'|_1$ and $\sigma:=|\cW'|_2$
are respectively the number of points and lines of $\cW'$;
see Table~\ref{unum}.
\item \framebox{$A_t\neq\mathbf{0}$, $B_t=\mathbf{0}$.}\label{D2}
  In this case the only matrices in RREF are
  $S_{\gamma,0}$ with $\gamma\in\FF_q$ and $S_{0,1}$.
  Thus,
  \[ n_{\fs}(S_t,n)=\sum_{\gamma\in\FF_q}n_{\fs}(S_{\gamma,0},n)+
    n_{\fs}(S_{0,1},n). \]
       By Case C.\ref{C2} of \S~\ref{Sym-even-t},
       $n_{\fs}(S_{\gamma,0},n)=n_{\fs}(S_{0,0},n)$ for all $\gamma\in\FF_q$; thus,
       \[ n_{\fs}(S_t,n)=qn_{\fs}(S_{0,0},n)+n_{\fs}(S_{0,1},n), \]
       where $n_{\fs}(S_{0,1},n)$ is computed in Case~C.\ref{C3} of \S~\ref{Sym-even-t}
       (and $n_{\fs}(S_{0,0},n)$ is computed in Case~C.\ref{C2}).
  \item \framebox{$A_t\neq\mathbf{0}, B_t\neq\mathbf{0}$.}\label{D3}
    There are two possibilities.
    \begin{enumerate}[\mbox{D.\ref{D3}.}1)]
      \renewcommand{\theenumi}{\relax}
      \item \framebox{$\alpha_t=\beta_t=0$.}\label{D31} In this case, the matrix $S_{\gamma, \delta}$ has the form
      \[ S_{\gamma,\delta}=\begin{pmatrix}
       \alpha_1 & \ldots & \alpha_{t-1} & 0 & \gamma \\
       \beta_1  & \ldots & \beta_{t-1} & 0 & \delta
       \end{pmatrix} \]
       and it is in $t$-REF (as $S_t$ is in $t$-REF).
       Observe that the number of lines admitting a representative matrix in $t$-REF whose $(t+1)$-prefix is $S_{\gamma,\delta}$ can be computed as in Case C.\ref{C3}.\ref{C31} or C.\ref{C3}.\ref{C32} of \S~\ref{Sym-even-t}
       according as $\fs(A,B)=0$ or $\fs(A,B)\neq0$,
       but does not depend on the choice of
       $\gamma$ and $\delta$.
        Thus,
        \[ n_{\fs}(S_t,n)=q^2n_{\fs}(S_{t-1},n). \]
      \item \framebox{$(\alpha_t,\beta_t)\neq(0,0)$.}\label{D32}
        Let $A_{\gamma}=(\alpha_1,\ldots,\alpha_t,\gamma,0,\ldots,0)$ and
        $B_{\delta}=(\beta_1,\ldots,\beta_t,\delta,0,\ldots,0)$.
        Clearly,
        \[ \fs(A_{\gamma},B_{\delta})=\alpha_1\beta_2-\alpha_2\beta_1+\cdots+
          \alpha_t\delta-\beta_t\gamma. \]
        As $(\alpha_i,\beta_i)$ for $i=1,\ldots,t$ are all given and $(\alpha_t,\beta_t)\neq(0,0)$,
        $\fs(A_{\gamma},B_{\delta})=0$ is a non-trivial linear equation in the unknowns $\gamma$ and
        $\delta$.
                    Hence, there are exactly
        $q$ values of $(\gamma,\delta)$ such that $\fs(A_{\gamma},B_{\delta})=0$.
        For each of these values we have, by Case~C.\ref{C3}.\ref{C31},
        $q^{2n-t-2}(q^{2n-t-1}-1)+q^{2n-t-1}$ distinct lines to take into account.
        For
        the remaining $q^2-q$ values of $(\gamma,\delta)$, such that $\fs(A_{\gamma},B_{\delta})\neq0$,
        we have, by Case~C.\ref{C3}.\ref{C32},
        $(q^{2n-t-1}-1)q^{2n-t-2}$ distinct lines.
        Consequently,
     \[ n_{\fs}(S_t,n)=q^{4n-2t-1}. \]
\end{enumerate}
\end{enumerate}
\subsubsection{Complexity}
Given a $(2\times t)$-matrix $S_t$ in RREF and $n\in\NN$, the computational complexity of the algorithm to determine
 $n_{\fs}(S_t,n)$ is $O(n)$.
 This can be immediately seen by analyzing the steps presented in the previous sections.
 For the convenience of the reader we summarize the various cases, depending on the structure of $S_t$, together with
 their complexity, in Table~\ref{tabS}.
 This proves Theorem~\ref{t15} for symplectic line Grassmannians.

\begin{table}
\caption{Enumerator for Symplectic Line Grassmannians}
\[ \begin{array}{c|c|c|c|c}
S=\begin{pmatrix} A_t\\ B_t\end{pmatrix} & {t} & \text{Case} & n_\fs(S,n) & \text{Complexity} \\ \hline
\emptyset & 0 & &
\begin{array}[c]{c}
\\
\frac{(q^{2n-1}-1)(q^{2n-2}-1)}{(q^2-1)(q-1)}\\
\\
\end{array}
 & O(1) \\ \hline
\begin{array}[c]{c}
\begin{pmatrix}
0 & 0 & \ldots & 0 & 0 \\
0 & 0 & \ldots & 0 & 0
\end{pmatrix} \\
\end{array} & \text{Even} & C.\ref{C1} &
\begin{array}[c]{c}
\\
\frac{(q^{2n-t}-1)(q^{2n-t-2}-1)}{(q^2-1)(q-1)}  \\
\\
\end{array}
& O(1) \\ \hline
\begin{array}[c]{c}
\begin{pmatrix}
\alpha_1 & \alpha_2 & \ldots & \alpha_{t-1} & \alpha_t \\
0 & 0 & \ldots & 0 & 0
\end{pmatrix} \\
(\alpha_1,\ldots,\alpha_t)\neq0 \\
\end{array} & \text{Even} &  C.\ref{C2} & q^{2n-t-2}\frac{q^{2n-t}-1}{q-1} & O(1) \\ \hline
\begin{array}[c]{c}
\begin{pmatrix}
\alpha_1 & \alpha_2 & \ldots & \alpha_{t}  \\
\beta_1 & \beta_2 & \ldots & \beta_{t}
\end{pmatrix} \\
(\alpha_1,\ldots,\alpha_t)\neq\mathbf{0} \\
(\beta_1,\ldots,\beta_t)\neq\mathbf{0} \\
\fs(A,B)=0
\end{array} & \text{Even} & C.\ref{C3}.\ref{C31} &
q^{2n-t-1}(q^{2n-t}-1)+q^{2n-t} & O(t) \\ \hline
\begin{array}[c]{c}
\begin{pmatrix}
\alpha_1 & \alpha_2 & \ldots & \alpha_{t}  \\
\beta_1 & \beta_2 & \ldots & \beta_{t}
\end{pmatrix} \\
(\alpha_1,\ldots,\alpha_t)\neq\mathbf{0} \\
(\beta_1,\ldots,\beta_t)\neq\mathbf{0} \\
\fs(A,B)\neq0
\end{array} & \text{Even} & C.\ref{C3}.\ref{C32} & (q^{2n-t}-1)q^{2n-t-1} & O(t) \\ \hline
\begin{array}[c]{c}
\begin{pmatrix}
0 & 0 & \ldots & 0 & 0 \\
0 & 0 & \ldots & 0 & 0
\end{pmatrix} \\
\end{array} & \text{Odd} & D.\ref{D1} &
\frac{(q^{2n-t-1}-1)^2}{(q-1)(q^2-1)}
   & O(1) \\ \hline
\begin{array}[c]{c}
\begin{pmatrix}
\alpha_1 & \alpha_2 & \ldots & \alpha_{t-1} & \alpha_t \\
0 & 0 & \ldots & 0 & 0
\end{pmatrix} \\
(\alpha_1,\ldots,\alpha_t)\neq0 \\
\end{array} & \text{Odd} &  D.\ref{D2} &
\begin{array}[c]{c}
\\
qn_{\fs}^E(S_{0,0},n)+n_{\fs}^E(S_{0,1},n)\\
\text{\rm with }n_{\fs}^E(S_{0,0},n)\ \text{\rm as in Case }C.\ref{C2}\ \text{\rm and}\\
n_{\fs}^E(S_{0,1},n)\ \text{\rm as in Cases }C.\ref{C3}.\ref{C31}\ \text{\rm or } C.\ref{C3}.\ref{C32}\\
\\
\end{array}
 & O(1) \\ \hline
\begin{array}[c]{c}
\begin{pmatrix}
\alpha_1 & \alpha_2 & \ldots & \alpha_{t-1} & 0  \\
\beta_1 & \beta_2 & \ldots & \beta_{t-1} & 0
\end{pmatrix} \\
(\alpha_1,\ldots,\alpha_{t-1})\neq\mathbf{0} \\
(\beta_1,\ldots,\beta_{t-1})\neq\mathbf{0} \\
\end{array} & \text{Odd} & D.\ref{D3}.\ref{D31} &
\begin{array}[c]{c}
\\
q^2n_{\fs}^E(S_{t-1},n)\,\,\,\text{\rm with }\\
n_{\fs}^E(S_{t-1},n)\ \text{\rm as in Cases }C.\ref{C3}.\ref{C31}\ \text{\rm or } C.\ref{C3}.\ref{C32}\\
\\
\end{array} & O(t) \\ \hline
\begin{array}[c]{c}
\begin{pmatrix}
\alpha_1 & \alpha_2 & \ldots & \alpha_{t}   \\
\beta_1 & \beta_2 & \ldots & \beta_{t}
\end{pmatrix} \\
(\alpha_1,\ldots,\alpha_{t-1})\neq\mathbf{0} \\
(\beta_1,\ldots,\beta_{t-1})\neq\mathbf{0} \\
(\alpha_t,\beta_t)\neq(0,0)
\end{array} & \text{Odd} & D.\ref{D3}.\ref{D32} & q^{4n-2t-1}  & O(1) \\ \hline

\end{array} \]
\label{tabS}
\noindent
\centerline{
For the meaning of the symbols and the constants, see the relevant Cases C.x and D.x. in
\S~\ref{Sym-even-t} and \S~\ref{Sym-odd-t}.}
\end{table}

\section{Enumerative coding}
\label{s3}
In this section, following the approach of~\cite{Cover}, we construct enumerators for the points of
$\Delta_{n,2}$ and
$\overline{\Delta}_{n,2}$
using the functions
$n_{\fq}$ and $n_{\fs}$ introduced in Section~\ref{p2}.
We shall present the full details for the orthogonal Grassmannian $\Delta_{n,2}$;
the symplectic case is entirely analogous.

Fix a total order $\preceq$ on the vectors of $\FF_q^2$ and
write $A\prec B$ if and only if $A\preceq B$ and $A\neq B$.
Let $\ell$ be a totally $\fq$-singular line of $V$ and $G_{\ell}=(G_1,\ldots,G_{2n+1})$ be its
$2\times (2n+1)$-representative matrix (in RREF), where $G_i\in\FF_q^2$ is the $i$-th column of $G_{\ell}$.
For any $j\leq 2n+1$ and $X\in\FF_q^2$, let
$S_{j}^X:=(G_1,\ldots,G_{j-1},X)$ be the $(2\times j)$-matrix comprising the first $j-1$ columns
of $G_{\ell}$ and whose last column is $X$.

Let $\iI=\{0,\dots, N-1\}$,
with $N=|\Delta_{n,2}|_1$ (see Table~\ref{unum}) and define
\begin{equation} \label{iota}
  \iota:\begin{cases}\Delta_{n,2}\to\iI \\
  \ell\mapsto \iota(G_{\ell}):=\displaystyle
  \sum_{j=1}^{2n+1}\sum_{X\prec G_j}n_{\fq}(S_j^X,n).
\end{cases}
\end{equation}
The order $\ors$ defined on the vectors of $\FF_q^2$ can be extended to
 matrices of order $2\times (2n+1)$ lexicographically; that is  $G\ll H$
if and only if there exists
$i\in \{1,\dots, 2n+1\}$ such that $\forall j<i: G_j=H_j$ and $G_i\ors H_i$.
By the proof of Theorem~\ref{bijection} we see that $G\ll H$ if and only if $\iota(G)<\iota(H)$.

We say that a vector $X\in \FF_q^2$  is \emph{allowable in
position $j$ for $(G_1,\ldots,G_{j-1})$} if and only if
$n_{\fq}(S_j^X,n)>0$, i.e.
 $(G_1,\dots,G_{j-1},X,X_{j+1},\dots,X_{2n+1})$ represents
  a totally $\fq$-singular line for at least one choice
 of $X_{j+1},\ldots,X_{2n+1}$.

\begin{theorem}\label{bijection}
The index function $\iota$ defined in~\eqref{iota} is a bijection.
\end{theorem}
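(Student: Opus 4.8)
The plan is to show that $\iota$ computes the \emph{rank} of each line in the lexicographic order $\ll$, and then to invoke the elementary fact that the rank function of a finite totally ordered set of cardinality $N$ is a bijection onto $\{0,1,\dots,N-1\}$. Since the RREF representation $G_\ell$ of a totally $\fq$-singular line $\ell$ is unique, the assignment $\ell\mapsto G_\ell$ is injective, so $\ll$ restricts to a genuine total order on the set of the $N$ representations of totally $\fq$-singular lines. Thus everything reduces to the single identity
\[ \iota(G_\ell)=\#\{\ell'\ \text{totally }\fq\text{-singular}: G_{\ell'}\ll G_\ell\}, \]
after which injectivity and surjectivity follow at once and the values automatically lie in $\iI$.

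To establish this identity I would partition the lines lying strictly below $\ell$ according to the first column in which their representation departs from $G_\ell$. Concretely, for $\ell'\neq\ell$ the condition $G_{\ell'}\ll G_\ell$ means there is a unique position $j$ with $(G_{\ell'})_i=G_i$ for all $i<j$ and $(G_{\ell'})_j\ors G_j$. Hence $\ell'$ is one of the lines whose representation has prefix $S_j^X=(G_1,\dots,G_{j-1},X)$ with $X=(G_{\ell'})_j\ors G_j$. By Definition~\ref{nqns} the number of totally $\fq$-singular lines whose representation has prefix $S_j^X$ is exactly $n_{\fq}(S_j^X,n)$, so grouping the preceding lines this way reproduces precisely the double sum defining $\iota(G_\ell)$.

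The step that needs care --- and the main obstacle --- is to verify that this grouping is both disjoint and exhaustive, which is where uniqueness of the RREF enters. For disjointness, suppose a line $\ell'$ had a representation with prefix $S_j^X$ and also with prefix $S_{j'}^{X'}$ for $(j,X)\neq(j',X')$, $X\ors G_j$, $X'\ors G_{j'}$; taking $j\le j'$, comparison of the $j$-th columns forces either $X=X'$ (if $j=j'$) or $X=G_j$ (if $j<j'$), the latter contradicting $X\ors G_j$. For exhaustiveness, every $\ell'$ with $G_{\ell'}\ll G_\ell$ has a well-defined first-difference position $j$ by totality of $\ors$, placing it in exactly one block; conversely any line counted by some $n_{\fq}(S_j^X,n)$ with $X\ors G_j$ agrees with $G_\ell$ before position $j$ and is smaller at position $j$, hence satisfies $G_{\ell'}\ll G_\ell$.

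Once the identity is proved, I would conclude by observing that the rank map $\ell\mapsto\#\{\ell':G_{\ell'}\ll G_\ell\}$ is strictly increasing with respect to $\ll$ (enlarging $\ell$ strictly enlarges the set of lines below it), hence $\iota$ is injective; an injective map between finite sets of equal cardinality $N$ is a bijection, so $\iota$ is a bijection, and incidentally this confirms the remark that $G\ll H$ if and only if $\iota(G)<\iota(H)$. The symplectic case is verbatim the same, reading $n_{\fs}$ for $n_{\fq}$ and $\overline{\Delta}_{n,2}$ for $\Delta_{n,2}$.
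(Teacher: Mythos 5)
Your proposal is correct and follows essentially the same route as the paper: the paper's set $\iota^\ors(G)$ is exactly your set of lexicographic predecessors, the identity $\iota(G)=\#\iota^\ors(G)$ is your rank identity (proved by the same disjoint decomposition by first-difference position, relying on uniqueness of the RREF representation), and injectivity plus equal cardinalities yields bijectivity in both arguments. No substantive difference to report.
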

\begin{proof}
As  $|\Delta_{n,2}|_1=|\iI|$, it is enough to show that
$\iota$ is injective.
Let
\[
\setlength{\nulldelimiterspace}{0pt}
\begin{array}{l@{=}l}
 G&(G_1, G_2, \ldots, G_{i-1}, G_i, \ldots, G_{2n+1}); \\
 H&(H_1, H_2, \ldots, H_{i-1}, H_i, \ldots, H_{2n+1}).
 \end{array} \]
We will show that if $G\ll H$, then $\iota(G)<\iota (H)$.
Suppose $G_i\ors H_i$ and $G_s=H_s\,\,\forall s<i.$
Define
\begin{multline}
\label{ee1}
 \iota^{\ors}(G):=\{ (X_1,\ldots,X_{2n+1}) : X_1\ors G_1 \}
 \cup\{ (G_1,X_2,\ldots,X_{2n+1}) : X_2\ors G_2 \} \cup \ldots\\
 \ldots \cup
\{ (G_1,G_2,\ldots,G_{2n}, X_{2n+1}) : X_{2n+1}\ors G_{2n+1} \}.
\end{multline}
  In~(\ref{ee1}) and throughout this proof, the elements of the sets are all matrices in RREF
representing totally $\fq$-singular lines. Clearly, if $G_1=H_1, \ldots, G_{i-1}=H_{i-1}$ and $G_i\ors H_i$ for
some columns $H_1,\ldots,H_i$, then
\[ G\in\{(G_1,\ldots,G_{i-1},X_i,X_{i+1},\ldots,X_{2n+1}) : X_i\ors H_i\}; \]
in particular, $G\in\iota^{\ors}(H)$.
Furthermore, if $G\in\iota^\ors(H)$, then $\iota^\ors(G)\subset\iota^\ors(H)$.
Suppose
$Y:=(Y_1,\ldots,Y_{2n+1})\in\iota^\ors(G)$. Then,
there exists $j$ such that $Y_1=G_1,\ldots,Y_{j-1}=G_{j-1}$ and $Y_j\ors G_j.$
\begin{itemize}
\item
If $j<i$, then $Y_1=G_1=H_1,\ldots,Y_{j-1}=G_{j-1}=H_{j-1}$ and
$Y_j\ors H_j=G_j$; thus $Y\in\iota^\ors(H)$.
\item
If $j=i$, then $Y_i\ors G_i\ors H_i$ and
$Y\in\{(G_1,\ldots,G_{i-1},X_i,X_{i+1},\ldots,X_{2n+1}) : X_i\ors H_i\}$;
thus $Y\in\iota^\ors (H)$.
\item
If $j>i$, then $Y_i=G_i\ors H_i$;
thus,
\[ Y\in
\{(G_1,\ldots,G_{i-1},X_i,X_{i+1},\ldots,X_{2n+1}) : X_i\ors H_i\}; \]
consequently, $Y\in\iota^\ors (H)$.
\end{itemize}
As $G\in\iota^\ors(H)$ but $G\not\in\iota^\ors (G)$, the
above inclusions are proper.
We now show that $\iota(G)=|\iota^\ors(G)|$.
Note that
\begin{multline*}
|\{(G_1,G_2,\ldots,G_{i-1},X_i,X_{i+1},\ldots X_{2n+1}) :X_i\ors G_i \}|= \\
\sum_{X_i\ors G_i}|\{(G_1,G_2,\ldots,G_{i-1},X_i,X_{i+1},\ldots X_{2n+1})\}|
=\sum_{X_i\ors G_i}n_{\fq}((G_1,\ldots,G_{i-1},X_i),n).
\end{multline*}
Furthermore, as the sets in~\eqref{ee1} are  disjoint,
\begin{multline*}
 |\iota^\ors(G)|=|\{ (X_1,\ldots,X_{2n+1}) : X_1\ors G_1 \}|+
 |\{ (G_1,X_2,\ldots,X_{2n+1}) : X_2\ors G_2 \}|+\cdots\\
 +
 |\{ (G_1,G_2,\ldots,G_{2n}, X_{2n+1}) : X_{2n+1}\ors G_{2n+1} \}|=\\
 \sum_{X_1\ors G_1}n_{\fq}((X_1),n)+\sum_{X_2\ors G_2}n_{\fq}((G_1\,X_2),n)+
 \ldots
 +\!\!\!\!\!\!\sum_{X_{2n+1}\ors G_{2n+1}} \!\!\!\!\!\!\!n_{\fq}((G_1\,G_2\,\ldots\,G_{2n}\,X_{2n+1}),n)=\\
 =\sum_{i=1}^{2n+1}\sum_{X_i\ors G_i}n_{\fq}((G_1,\ldots,G_{i-1},X_i),n)=\iota(G).
\end{multline*}
To conclude, observe that for any two distinct lines represented by matrices $G$ and $H$ in RREF we
have either $G\in\iota^\ors(H)$ or $H\in\iota^\ors(G)$. The former
yields $\iota^\ors(G)\subset\iota^\ors(H)$, whence $\iota(G)<\iota(H)$;
the latter yields, in an entirely analogous way, $\iota(G)>\iota(H)$. In any case
$G\neq H$ gives $\iota(G)\neq\iota(H)$ and $\iota$ is injective.
\end{proof}
Given an index $i\in\mathbb{I}$, the following theorem characterizes each column $G_k,\,1\leq k\leq 2n+1$, of the representative matrix $G_{\ell}$ of a totally singular line
$\ell$ as the maximum allowable vector of $\FF_{q}^2$ for the given value of $i$ and $k$. This theorem is crucial to invert the enumerative function $\iota.$
\begin{theorem}
\label{tInv}
Suppose
$G=(G_1,\dots, G_{2n+1})$ represents a totally singular line $\ell$
and let $\iota(\ell)=i$. Let also for any $k=1,\ldots, 2n+1$,
\[ \theta(G_{\leq k}):=\sum_{X\ors G_k} n_{\fq}(S_k^X,n),\qquad
 i_k:=i-\sum_{j=1}^{k-1}\theta(G_{\leq j}). \]
Then $G_k$  is the
maximum vector of $\FF_q^2$ with respect to the order $\ors$ such that
$\theta(G_{\leq k})\leq i_k$.
\end{theorem}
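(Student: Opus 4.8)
The goal is to invert the bijection $\iota$: given the index $i=\iota(G)$, reconstruct the columns $G_1,\dots,G_{2n+1}$ one at a time, from left to right, and show that at each step $G_k$ is the maximal allowable vector whose partial cumulative count $\theta(G_{\le k})$ does not exceed the ``remaining'' index $i_k$. The key structural fact I would lean on is the decomposition of $\iota(G)$ already established in the previous theorem, namely that
\[ \iota(G)=\sum_{k=1}^{2n+1}\sum_{X\ors G_k} n_{\fq}(S_k^X,n)=\sum_{k=1}^{2n+1}\theta(G_{\le k}). \]
From this, $i_k = i - \sum_{j<k}\theta(G_{\le j}) = \sum_{j\ge k}\theta(G_{\le j})$, so $i_k$ is precisely the contribution to the index coming from columns $k$ through $2n+1$.

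\textbf{Step 1: two-sided bounds on $i_k$.} First I would show $\theta(G_{\le k})\le i_k$. This is immediate since $i_k=\theta(G_{\le k})+\sum_{j>k}\theta(G_{\le j})$ and every $\theta$ term is a nonnegative sum of values $n_{\fq}(\cdot,n)\ge 0$. The harder direction is to show that no allowable vector $X$ strictly larger than $G_k$ can satisfy $\theta(S_k^X)\le i_k$, where $\theta(S_k^X):=\sum_{Y\ors X}n_{\fq}(S_k^Y,n)$. Concretely, I must prove
\[ i_k < \sum_{Y\preceq G_k}n_{\fq}(S_k^Y,n), \]
i.e. that if we push the count one vector past $G_k$ we overshoot $i_k$.

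\textbf{Step 2: the overshoot inequality.} This is where the main work lies, and I expect it to be the principal obstacle. The point is that the ``tail'' $\sum_{j>k}\theta(G_{\le j})$ counts only those singular lines whose first $k$ columns agree with $G_1,\dots,G_k$ and which then branch off strictly below $G$ at some later coordinate; the total number of such lines is strictly less than $n_{\fq}(S_k^{G_k},n)$, the number of \emph{all} singular lines extending the prefix $(G_1,\dots,G_{k-1},G_k)$. I would make this precise by invoking the prefix-counting identity: the set enumerated by $n_{\fq}((G_1,\dots,G_k),n)$ decomposes (disjointly) into lines that continue to agree with $G$ at coordinate $k+1$ and lines that strictly drop below. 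An induction on the suffix length, mirroring the $\iota^{\ors}$ construction in the proof of the previous theorem, gives $\sum_{j>k}\theta(G_{\le j}) < n_{\fq}(S_k^{G_k},n)$. Adding $\theta(G_{\le k})=\sum_{Y\ors G_k}n_{\fq}(S_k^Y,n)$ to both sides yields exactly $i_k < \sum_{Y\preceq G_k}n_{\fq}(S_k^Y,n)$, as required.

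\textbf{Step 3: uniqueness and assembly.} Since the partial sums $\sum_{Y\preceq X}n_{\fq}(S_k^Y,n)$ are nondecreasing in $X$ (with respect to $\ors$) and strictly increase exactly at allowable vectors, the two inequalities from Steps 1 and 2 pin down $G_k$ uniquely as the largest allowable $X$ with $\theta(S_k^X)\le i_k$. Because $G_1,\dots,G_{k-1}$ are reconstructed before $i_k$ is formed, the recursion is well-founded and terminates at $k=2n+1$, recovering $G$ entirely; the fact that $\iota$ is a bijection (from the previous theorem) guarantees consistency, so $G$ so produced is the unique RREF representative with $\iota(G)=i$. The only subtlety to handle carefully is that allowability must be checked in RREF so that each candidate column genuinely extends to a singular line, which is exactly the condition $n_{\fq}(S_k^X,n)>0$ flagged in the definition preceding the statement.
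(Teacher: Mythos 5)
Your proposal is correct and follows essentially the same route as the paper: the identity $i_k=\sum_{j\ge k}\theta(G_{\le j})$ (which is the paper's computation $\#\Lambda(G_{\le k})=i_k$) gives $\theta(G_{\le k})\le i_k$ for free, and your overshoot inequality $i_k<\sum_{Y\preceq G_k}n_{\fq}(S_k^Y,n)$ is precisely the paper's proper inclusion $\Lambda(G_{\le k})\subsetneq\Theta(G'_{\le k})$, with strictness coming in both arguments from the fact that $G$ itself extends the prefix $(G_1,\dots,G_k)$ but is not counted in the tail. The only difference is presentational: you state the inequality directly, whereas the paper phrases it as a contradiction from a proper set inclusion.
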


\begin{proof}
Define
\[ \Theta(G_{\leq k}):=\{ (G_1,\ldots,G_{k-1},X,\ldots)\in\iota^{\ors}(G) : X\ors G_k \}, \]
\[ \Lambda(G_{\leq k}):=\{ (G_1,\ldots,G_{k-1},Y,\ldots)\in\iota^\ors(G): Y\ore G_k\}. \]
Then,
\[ \Lambda(G_{\leq 1})=\{ (Y,\ldots )\in\iota^\ors(G): Y\ore G_1 \}=\iota^\ors(G). \]
We have
\[ |\Theta(G_{\leq k})|=\sum_{X\ors G_k} n_{\fq}(S_k^X,n)=\theta(G_{\leq k}). \]
On the other hand, for $k>1$ we can write
\begin{multline*}
 \Lambda(G_{\leq k})= \iota^\ors(G)\setminus\big(
  \{ (X_1,\ldots): X_1\ors G_1 \}\cup
  \{(G_1,X_2,\ldots): X_2\ors G_2\}\cup
  \cdots \\ \cdots \cup \{ (G_1,G_2,\ldots,G_{k-2},X_{k-1},\ldots) : X_{k-1}\ors G_{k-1}\}
  \big)=
  \iota^\ors (G)\setminus\bigcup_{j=1}^{k-1}\Theta(G_{\leq j}).
\end{multline*}
Thus,
\[ |\Lambda(G_{\leq k})|=\iota(G)-\sum_{j=1}^{k-1}\theta(G_{\leq j})=i_k. \]
We distinguish two cases:
\begin{itemize}
\item $k=1$. By way of contradiction, suppose $G_1$ is not maximum and
   $\theta(G_{\leq1})\leq i_1=i$.
Then, there is an element $G'_1\in \FF_q^2$,  with $G_1\ors G_1'$ and
$\theta(G_{\leq 1}')\leq i$. By construction,
$\Lambda(G_{\leq 1})\subset\Theta(G_{\leq1}')$. Observe that $G\in\Theta(G_{\leq1}')$ but
$G\not\in\Lambda(G_{\leq1})$. Thus, the inclusion is proper.
Moving to the cardinalities we have
\[ i=|\Lambda(G_{\leq1})|<|\Theta(G_{\leq1}')|=\theta(G_{\leq1}')\leq i, \]
a contradiction.
\item $k>1$.  Suppose that the thesis holds for $j\leq k$ but not for $j=k+1$, i.e.\
  all  $G_j$ for $j\leq k$
  are maximum such that $\theta(G_{\leq j})\leq i_j$ and
  $G_{k+1}$ is not the maximum element such that $\theta(G_{\leq k+1})\leq i_{k+1}$.
  Then, as before, there is a
  $G_{k+1}'$ such that $G_{k+1}\ors G_{k+1}'$ with $\theta(G_{\leq k+1}')\leq i_{k+1}$.
  For any $Y\ore G_{k+1}$ we have $Y\ors G_{k+1}'$; thus, the following
  holds
  \begin{multline*}
    \Lambda(G_{\leq k+1})=\{ (G_1,\ldots,G_{k},Y,\ldots)\in\iota^\ors(G):
  Y\ore G_{k+1}\}\subset \\
  \subset\{ (G_1,\ldots,G_{k},X,\ldots)\in\iota^{\ors}(G): X\ors G_{k+1}' \}=\Theta(G_{\leq k+1}').
  \end{multline*}
  Furthermore, as $G\in\Theta(G_{\leq k+1}')$ but $G\not\in \Lambda(G_{\leq k+1})$,
  the above inclusion is proper.
  Thus,
  \[ i_{k+1}=|\Lambda(G_{\leq k+1})|<|\Theta(G_{\leq k+1}')|=\theta(G_{\leq k+1}')\leq i_{k+1},
  \]
  a contradiction.
\end{itemize}
\end{proof}
In Table~\ref{tt} we show in detail the procedure arising from
Theorem~\ref{tInv} to efficiently invert the function
$\iota$.
Observe that the check
$n_{\fq}(S_k^Y,n)>0$ is necessary, as each column $G_k$ must
be allowable and
columns which are allowable in a given position $k$ may not be allowable in
position $k-1$ or vice-versa.
\begin{table}[h]
\caption{Inverse of $\iota$}
\begin{algorithmic}
\Require $i\in\{0,\dots, N-1\}$
\State $i_1\gets 1$
\For {$k=1,\ldots,2n+1$}
\State
 $M\gets\{Y\colon \sum_{X\ors Y} n_{\fq}(S_k^X,n)\leq i_k
 \text{ and } n_{\fq}(S_k^Y,n)>0 \}$
\State
 $G_k\gets \max M$
\State $\theta(G_{\leq k}) \gets \sum_{X\ors G_k} n_{\fq}(S_k^X,n)$;
\State $i_{k+1}\gets i_{k}-\theta(G_{\leq k})$;
\EndFor
\State \Return $G=(G_1,\ldots,G_k,\ldots,G_{2n+1})$
\end{algorithmic}
\label{tt}
\end{table}

\subsection{Complexity}
\label{compl}
    We now estimate the actual cost of the enumerative encoding presented in this section.
    In the orthogonal case, to evaluate $\iota$ we need to compute at most $q^2-1$ values
    of $n_{\fq}(S_j^X,n)$ for any $j=1,\ldots,2n+1$ as $X$ varies in $\FF_q^2$.
    So, the overall complexity turns out to be $O(q^2n^3)$.
    Conversely, given an index $i\in\iI$, recovering the corresponding line $\ell=\iota^{-1}(i)$
    requires to  test at most $q^2-1$ vectors $X\in\FF_q^2$ for each column of $G$; thus, the
    cost is once more $O(q^2n^3)$.
    In the symplectic case,
    the same arguments  give a complexity of $O(q^2n^2)$ for enumerative enconding.

   The computational complexity of the enumerative algorithm for the orthogonal and symplectic Grassmannians are
   summarized in Theorem~\ref{main thm}.

\section{Application to polar Grassmann codes }
\label{s5}
We now apply the enumeration techniques discussed in the previous
sections to efficiently implement the polar Grassmann  codes $\cP_{n,2}$
and $\cW_{n,2}$.
 We refer to Section~\ref{PGC} for the definition and some basics about
 these codes.

We shall focus the discussion on  the case of orthogonal
polar Grassmann codes, while we will only point out the adjustements
to be made for the symplectic case $\cW_{n,2}$, as the arguments in the
two cases are very similar.
\subsection{Encoding}
\label{s-enc}
As in Section~\ref{s2}, let $V$ be a vector space of
dimension $2n+1$ over $\FF_q$ and fix a basis
$B:=(e_1,\ldots,e_{2n+1})$ of $V.$

It is well known that the dual $(\bigwedge^kV)^*$
of the vector space $\bigwedge^kV$ is isomorphic to $\bigwedge^{2n+1-k}V$.
We recall the following universal property of the $k^{\text{th}}$-exterior
power of a vector space.
\begin{theorem}[\,{\cite[Theorem 14.23]{ALA}}]
\label{ut}
Let $V,U$ be two vector spaces over the same field.
 A map $f: V^k \longrightarrow U$ is alternating
  $k$--linear if and only if there is a
  linear map $\zeta:
  \bigwedge^k V \longrightarrow U$ with $\zeta(v_1 \wedge v_2
  \wedge \cdots \wedge v_k) = f(v_1,v_2,\ldots,v_k)$.
  The map $\zeta$ is uniquely determined.
\end{theorem}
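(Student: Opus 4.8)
The plan is to prove the two implications separately and then argue uniqueness, taking as the construction of $\bigwedge^kV$ the quotient of the $k$-fold tensor power $V^{\otimes k}$ by the subspace $I$ spanned by all decomposable tensors $v_1\otimes\cdots\otimes v_k$ in which two of the factors coincide. I would write $\pi\colon V^{\otimes k}\to\bigwedge^kV$ for the canonical projection, so that $\pi(v_1\otimes\cdots\otimes v_k)=v_1\wedge\cdots\wedge v_k$, and $\wedge\colon V^k\to\bigwedge^kV$ for the associated map $(v_1,\ldots,v_k)\mapsto v_1\wedge\cdots\wedge v_k$, which is alternating $k$-linear by the very definition of the relations cutting out $I$.

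The reverse implication is the easy direction and I would dispose of it first. If a linear map $\overline{f}\colon\bigwedge^kV\to U$ with $\overline{f}(v_1\wedge\cdots\wedge v_k)=f(v_1,\ldots,v_k)$ is given, then $f=\overline{f}\circ\wedge$. Since $\wedge$ is alternating $k$-linear and $\overline{f}$ is linear, their composite $f$ is alternating $k$-linear, so no map outside the asserted class can arise this way.

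For the forward implication I would first invoke the universal property of the tensor product: a $k$-linear map $f$ factors uniquely through $V^{\otimes k}$, yielding a linear map $\tilde f\colon V^{\otimes k}\to U$ with $\tilde f(v_1\otimes\cdots\otimes v_k)=f(v_1,\ldots,v_k)$. The crucial step is to check that $\tilde f$ annihilates $I$: a spanning tensor of $I$ has two equal factors, and because $f$ is alternating it vanishes on such arguments, so $\tilde f$ kills each generator of $I$ and hence $I\subseteq\ker\tilde f$. Consequently $\tilde f$ descends to the quotient, producing a linear $\overline{f}\colon\bigwedge^kV\to U$ with $\overline{f}\circ\pi=\tilde f$, which is precisely the claimed factorization. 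Uniqueness is then immediate, since the decomposable wedges $v_1\wedge\cdots\wedge v_k$ span $\bigwedge^kV$ and $\overline{f}$ is prescribed on each of them, so two linear maps satisfying the identity agree on a spanning set and therefore coincide.

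The one point demanding care — and the only place where the argument could fail — is the inclusion $I\subseteq\ker\tilde f$: it must use the \emph{alternating} hypothesis, namely vanishing whenever two arguments coincide, rather than mere antisymmetry under transpositions. This distinction is genuine when $\operatorname{char}\KK=2$, that is for the even $q$ treated throughout the paper, where antisymmetry alone does not force vanishing on repeated arguments. Since $\bigwedge^kV$ is defined via the alternating relations, matching those relations against the alternating hypothesis on $f$ is exactly what makes the factorization go through, and so I would flag this as the step to state explicitly rather than gloss over.
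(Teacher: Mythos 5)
Your proof is correct and is the standard argument: realize $\bigwedge^kV$ as the quotient of $V^{\otimes k}$ by the span of decomposable tensors with a repeated factor, factor $f$ through the tensor power, check that the induced map kills the relations, and get uniqueness from the fact that decomposable wedges span. The paper itself offers no proof of this statement --- it is imported verbatim as Theorem 14.23 of Roman's \emph{Advanced Linear Algebra} --- so there is nothing to compare against; your observation that the ``alternating'' hypothesis (vanishing on repeated arguments) rather than mere antisymmetry is what is needed, and that the distinction matters in characteristic $2$, is exactly the right point to make explicit given that the paper works over $\FF_q$ with $q$ even as well as odd.
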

By Theorem~\ref{ut}, for $k=2,$
any linear functional $\zeta$ on $\bigwedge^2 V$ corresponds
to a bilinear alternating form on $V$; hence it can be represented by an antisymmetric
$(2n+1)\times (2n+1)$-matrix $M=(\fmm_{ij})_{i,j=1}^{2n+1}$ whose entries
are
$\fmm_{ij}=\zeta(e_i\wedge e_j)$.
With a slight abuse of notation, for any $\ell\in\Delta_{2,n}$
write $\zeta(\ell):=\zeta(G_{1}^{\ell}\wedge G_2^{\ell})$,
where $G_{1}^{\ell}$ and $G_2^{\ell}$ are the two rows of the representative matrix $G_{\ell}$ of $\ell$ in RREF.
Clearly, we also have $\zeta(\ell)=G_1^{\ell}MG_2^{\ell\,T}$.
Let $\cP_{n,2}$ be the line orthogonal Grassman $[N,K,d]$-code as in Theorem~\ref{pgc-thm}.

\begin{definition}\label{encoding function}
Let $\psi\colon \FF_q^K\rightarrow \cP_{n,2}$ be the function mapping any message $(m_i)_{i=1}^K\in \FF_q^K$ to the codeword
$(c_{i+1})_{i=0}^{N-1}\in \cP_{n,2}$ where   \[ c_{i+1}:=\zeta(\iota^{-1}(i))={G^{(i)}_1}M{G^{(i)T}_2}, \]

\noindent the function $\iota$ is defined in~(\ref{iota}), $G^{(i)}_1$ and $G^{(i)}_2$ are the rows of the representative matrix $G_{\ell_i}$ in RREF
  of the line $\ell_i:=\iota^{-1}(i)$
and $\zeta$ is the bilinear alternating form with matrix $M:=M_0-M_0^T$ with respect to $B$,
  with
  \begin{itemize}
    \item for $q$ odd:
\[ M_0=\begin{pmatrix}
  0 & m_1 & m_2  & \ldots & m_{2n} \\
  0 &  0  & m_{2n+1} & \ldots & m_{4n-1} \\
  \vdots & & \ddots & \ddots & \vdots \\
  0 & 0   & \ldots & 0 & m_{n(2n+1)} \\
  0 & \ldots & \ldots & 0 & 0
\end{pmatrix} \]
and
\item for $q$ even:
\[ M_0=\begin{pmatrix}
  0 & m_1 & m_2 &   & \ldots & m_{2n} \\
  0 &  0  & m_{2n+1}  & & \ldots & m_{4n-1} \\
  \vdots & & \ddots & & \ddots & \vdots \\
  0 & 0 & \ldots & 0 &m_{n(2n+1)-2} & m_{n(2n+1)-1}\\
  0 & 0   &  \ldots & 0 & 0 & 0 \\
  0 & \ldots & \ldots & 0 & 0 & 0
\end{pmatrix}. \]
\end{itemize}
\end{definition}
Clearly, $\psi$ is a linear function.

\begin{theorem}\label{encoding thm}
The function $\psi$ introduced in Definition~\ref{encoding function} is an encoding for $\cP_{n,2}.$
 \end{theorem}
\begin{proof}
We need to prove that $\psi$ is injective.
We first point out that, by Definition~\ref{encoding function}, a codeword $\mathbf{c}\in\cP_{n,2}$ is associated with a message $\mathbf{m}$ if the positions $c_{i+1},\,\, 0\leq i\leq N-1,$
of $\mathbf{c}$ are the values assumed on the lines of $\Delta_{n,2}$ by the linear functional $\zeta\in (\bigwedge^2 V)^*$ defined by $\mathbf{m}.$
In order to make more explicit the link between codewords $\mathbf{c}$ and linear functionals $\zeta$, we shall write $\mathbf{c}_{\zeta}.$

Suppose that $q$ is odd. Then $\cP_{n,2}=\bigwedge^2V$ (see \cite{IP13}).
A codeword $\mathbf{c}_{\zeta}$ is null if and only if the functional
$\zeta$ is identically null on $\bigwedge^2V$.
Indeed, since  $\langle\varepsilon_2(\Delta_{n,2})\rangle=\bigwedge^2V$,
 there exist
some positions $i_1+1,\ldots,i_{n(2n+1)}+1$ of $\mathbf{c}_{\zeta}$ such that $B':=(\iota^{-1}({i_j}))_{j=1}^{n(2n+1)}$
is a basis of $\bigwedge^2V.$ If $\mathbf{c}_{\zeta}=\mathbf{0}$, then  $\zeta$ is zero on all the elements
of $B'$. Hence, by linearity, $\zeta$ is the null functional.
This proves that $\psi$ is injective.

Suppose that $q$ is even. Now $\langle\varepsilon_2(\Delta_{n,2})\rangle$ is a hyperplane $W$ of $\bigwedge^2 V$ (see \cite{IP13}).
By Definition~\ref{encoding function}, any non-null codeword $\mathbf{c}_{\zeta}$ is associated with a message
$\mathbf{m}\in \FF_q^K$ which is in correspondence with a linear functional $\zeta\in (\bigwedge^2 V)^*$ which is not identically null on $W.$
By~\cite{IP14}, $W=\ker\zeta_0$ with
\begin{equation*}
\zeta_0\colon \bigwedge^2 V\rightarrow \FF_q,\,\,\,\, \zeta_0((u_{ij})_{1\leq i<j\leq 2n+1}):=u_{23}+u_{45}+\cdots+u_{2n,2n+1}
\end{equation*}
 where $u_{ij}$ are the Pl\"ucker coordinates of vectors in  $\bigwedge^2 V$ with respect to the basis $(e_i\wedge e_j)_{1\leq i<j\leq 2n+1}.$

 Given any message $\mathbf{m}=(m_1,\ldots,m_{n(2n+1)-1})\in\FF_q^K$, by Definition~\ref{encoding function},
 the antisymmetric matrix $M$ defined by $\mathbf{m}$ has $\fmm_{2n,2n+1}=0$.

 Let $\zeta\in(\bigwedge^2V)^*$ be the functional associated with $\mathbf{m}$ by means of
the bilinear alternating form defined by $M$.
 Observe that the functional $\zeta|_{W}$ induced by
the restriction of $\zeta$ to $W$ is null if and only if $\zeta$ is proportional to $\zeta_0$.
By construction, we have
 $\zeta(e_{2n}\wedge e_{2n+1})=0$.
We are now ready to prove that the function $\psi:\FF_q^K\to\cP_{n,2}$ is injective.
Indeed, if $\mathbf{m}_1$ and $\mathbf{m}_2$ are  messages inducing
two forms $\zeta_1$ and $\zeta_2$ and such
that $\psi(\mathbf{m}_1)=\psi(\mathbf{m}_2)$, then
$\zeta'=\zeta_1-\zeta_2$ is null on $W$. So, $\zeta'$
must be proportional to $\zeta_0$.
However, $(\zeta_1-\zeta_2)(e_{2n}\wedge e_{2n+1})=0$,
while $\zeta_0(e_{2n}\wedge e_{2n+1})=1$. So, the only possibility is that
$\zeta_1-\zeta_2$ is the null functional on $\bigwedge^2V$, i.e. $\mathbf{m}_1=\mathbf{m}_2$.
\end{proof}
A straightforward counting argument provides the following relationship between
the components of $\mathbf{m}$ and the entries $\fmm_{ij}$ of $M$ with $1\leq i<j\leq 2n+1$:
\begin{equation}
\label{components}
 {\fmm}_{ij}=m_{2n(i-1)+j-\frac{i^2-i}{2}-1}.
\end{equation}
So, we can directly determine
each component of $\mathbf{c}$ using just $\mathbf{m}$ without having to
resort to the generator matrix of the code.

\medskip

The case of symplectic Grassmann codes $\cW_{n,2}$ is entirely analogous to what
we proposed for $\cP_{n,2}$ for $q$ even. Definition~\ref{encoding function} remains the same if we write $\cW_{n,2}$ in place of $\cP_{n,2},$ ${{2n}\choose{k}}-1$ in place of ${{2n+1}\choose{k}}-1$ for $K$ and we
consider $q$ arbitrary. The only further difference in Theorem~\ref{encoding thm} is that $\langle\varepsilon_k(\overline{\Delta}_{n,2})\rangle=\ker(\zeta_0)$ with
\begin{equation*}
\zeta_0\colon \bigwedge^2 V\rightarrow \FF_q,\,\,\,\, \zeta_0((u_{ij})_{1\leq i<j\leq 2n+1}):=u_{12}+u_{34}+\cdots+u_{{2n-1},2n},
\end{equation*}
 where $u_{ij}$ are the Pl\"ucker coordinates of vectors in  $\bigwedge^2 V$.

\subsection{Decoding}
\label{s-dec}
 We address now the problem of recovering the original
message $\mathbf{m}$ once a codeword $\mathbf{c}$ has been given.
Here we shall assume that no error occurred; how to perform
error correction shall be discussed in the next section.
Clearly, by Section~\ref{s-enc}, recovering $\mathbf{m}$ is equivalent to reconstructing the
antisymmetric matrix $M$ associated with $\mathbf{m}$.

In general, polar Grassmann codes are not systematic, nor
there are entries in $\mathbf{c}$ corresponding exactly to the values $\fmm_{ij}$ in $M$.
None the less, it is possible to provide a list of lines yielding information positions in
 $\mathbf{c}$ such that the values $\fmm_{ij}$ can be easily determined.
\begin{theorem}
\label{Decode}
Let $\mathbf{c}$ be a codeword of $\cP_{n,2}$ and $M=({\fmm}_{ij})_{1\leq i,j\leq 2n+1}$ be the antisymmetric matrix associated
with the message $\mathbf{m}$ mapped to $\mathbf{c}$ using the
encoding $\psi$. Suppose that the pair $(i,j)$ with $1\leq i<j\leq 2n+1$ is in one of the following types:
\begin{enumerate}[Type I:]
\item $i\geq 2$ even and $j\geq i+2$
  or $i$ odd and $j\geq i+1$;
\item $i\geq 2$ even and $j=i+1$;
\item $i=1$ and $j>i$.
\end{enumerate}
Then the following holds:
\begin{itemize}
\item If $(i,j)$ is of Type I then ${\fmm}_{ij}=c_{\iota(\ell_{i,j})+1}$ where
$\ell_{i,j}:=\langle e_i,e_j\rangle$.
\item If $(i,j)$ is of Type II then ${\fmm}_{ij}$  can be obtained by solving a system of $2$ linear equations in $2$ unknowns for $q$ odd
  and a single linear equation for $q$ even.
\item If $(i,j)$ is of Type III then ${\fmm}_{ij}$  can be obtained by solving a linear equation.
\end{itemize}
\end{theorem}
\begin{proof}
If $(i,j)$ is of Type I, the thesis is straightforward. If $(i,j)$ is of Type II, we distinguish two cases according to whether
$q$ is odd or even.

Suppose $q$ odd. Consider two lines
$\ell^1:=\langle e_i+e_{i+3}, e_{i+1}-e_{i+2}\rangle$ and
$\ell^2:=\langle e_i-e_{i+3},e_{i+1}+e_{i+2}\rangle$ and
call $c_x:=c_{\iota(\ell^1)+1}$, $c_y:=c_{\iota(\ell^2)+1}$ the corresponding entries of $\mathbf{c}$.
Then we have
\[ \begin{cases}
  {\fmm}_{i,i+1}-{\fmm}_{i,i+2}-{\fmm}_{i+1,i+3}+{\fmm}_{i+2,i+3}=c_x \\
  {\fmm}_{i,i+1}+{\fmm}_{i,i+2}+{\fmm}_{i+1,i+3}-\mathfrak{m}_{i+2,i+3}=c_y
   \end{cases}. \]
The entries ${\fmm}_{i+1,i+3}$ and ${\fmm}_{i,i+2}$ correspond to indexes of Type I;
thus they  can be read off $\mathbf{c}$ directly.
The remaining unknowns
$\mathfrak{m}_{i,i+1}$ and $\mathfrak{m}_{i+2,i+3}$ can now be recovered by solving a system
in two unknowns. Observe that this operation has fixed complexity $O(1)$.

Suppose $q$ even. Recall that, in this case, $\fmm_{2n,2n+1}=0$.
Consider the line $\ell:=\langle e_i+e_{2n},e_{i+1}+e_{2n+1}\rangle$ and
let $c_x=c_{\iota(\ell)+1}$.
We get
\[ \fmm_{i,i+1}+\fmm_{i,2n+1}+\fmm_{i+1,2n}=c_x. \]
As $\fmm_{i,2n+1}$ and $\fmm_{i+1,2n}$ correspond to indexes of Type I, this gives the value of $\fmm_{i,i+1}$.

Suppose $(i,j)=(1,j)$ is of Type III.
If $j>3$, we consider the line $\ell=\langle e_1-e_2+e_3,e_j\rangle$.
A straightforward computation shows that the corresponding entry $c_z:=c_{\iota(\ell)+1}$ is
\[ \mathfrak{m}_{1j}-\mathfrak{m}_{2j}+\mathfrak{m}_{3j}=c_z \]
and both $(2,j)$ and $(3,j)$ are of Type I; thus we just have to solve this
equation.
As for the remaining
coefficients ${\fmm}_{12}$ and ${\fmm}_{13}$, we use the entries
corresponding to $\ell^{12}=\langle e_1-e_4+e_5,e_2\rangle$ and
$\ell^{13}=\langle e_1-e_4+e_5,e_3\rangle.$
\end{proof}

Theorem~\ref{Decode} shows that it is possible to extract any component of the message
$\mathbf{m}$ from a codeword $\mathbf{c}$ with complexity $O(1)$. As such, the complexity
to recover the whole of $\mathbf{m}$ is $O(n^2)$.

In the symplectic case, the same arguments as in Theorem~\ref{Decode}
lead to the following.
\begin{theorem}
Let $\mathbf{c}$ be a codeword of $\cW_{n,2}$ and $M=({\fmm}_{ij})_{1\leq i,j\leq 2n+1}$ be the antisymmetric matrix associated
with the message $\mathbf{m}$  mapped to  $\mathbf{c}$ using the
encoding $\psi$.
 Suppose that the pair $(i,j)$ with $1\leq i<j\leq 2n+1$ is in one of the following types:
\begin{enumerate}[Type I:]
\item $i\geq 1$ odd and $j\geq i+2$
  or $i$ even and $j\geq i+1$;
\item $i\geq 1$ odd and $j=i+1$;
\end{enumerate}
Then the following holds:
\begin{itemize}
\item If $(i,j)$ is of Type I then ${\fmm}_{ij}=c_{\iota(\ell_{i,j})+1}$ where
$\ell_{i,j}:=\langle e_i,e_j\rangle$.
\item If $(i,j)$ is of Type II then ${\fmm}_{ij}$  can be obtained by solving one linear equation.
\end{itemize}

\end{theorem}

\subsection{Error correction}
Locally decodable codes have received much attention in
recent years; see~\cite{LDC1,LDC2} for some surveys.
In general, a code is \emph{locally decodable} if it
is able to recover a given component $m_i$
of a message $\mathbf{m}$ with probability larger than $1/2$ querying just
a fixed number of components $r_j$ of the received vector $\mathbf{r}$ ---
this, clearly, under the assumption that
not too many errors have occurred; see~\cite{KT}.

In this section we shall introduce an algorithm to
reconstruct a correct information position using only
some local information. 

Let $\mathbf{r}=(r_{i+1})_{i=0}^{N-1}\in\FF_q^N$ be a received vector; by the
arguments in Section~\ref{s-enc}, $\mathbf{r}$ is
a codeword $\mathbf{c}$ if and only if there exists $\zeta\in(\bigwedge^2V)^*$,
such that $r_{i+1}=\zeta(G_1^{\ell}\wedge G_2^{\ell})$ for any $0\leq i\leq N-1$ where $\ell=\langle G_{1}^{\ell},G_2^{\ell}\rangle$
and $\ell=\iota^{-1}(i)$.
As in \S~\ref{s-enc}, we shall write $r_{i+1}:=\zeta(\ell).$

Fix now a position $i+1$ and consider the totally singular line $\ell=\iota^{-1}(i).$ Let
\[ \Sigma_{\ell}:=\{ \pi : \ell\subseteq\pi\subseteq \mathcal{Q}, \dim\pi=3 \} \]
be the set of all totally singular planes of $\mathcal{Q}$ containing $\ell$ (we remind
that we have always used vector dimension throughout the paper, but
we adopt projective terminology when speaking of geometric objects).
The restriction $\zeta_{\pi}$ of $\zeta$ to each plane $\pi$ determines by Theorem~\ref{ut} a
degenerate alternating bilinear form which we shall still denote by the same symbol.
Clearly, in absence of errors, all the forms $\zeta_{\pi}$, as $\pi$ varies in $\Sigma_{\ell}$, must agree on
$\ell$.
The overall number of totally singular planes containing a given fixed line
$\ell$ is $|\Sigma_{\ell}|=\frac{q^{2n-4}-1}{q-1}$; this is also the total number of different
forms $\zeta_{\pi}$ we can consider.

Let $0<\varepsilon\leq |\Sigma_{\ell}|$ be a parameter denoting the number of planes of
$\Sigma_{\ell}$ we want to use and
consider the following error correction strategy:
\begin{enumerate}
\item\label{beginA}
 Choose $\varepsilon\leq |\Sigma_{\ell}|$ planes at random in $\Sigma_{\ell}$.
\item
 For each of the chosen planes, say $\pi$, let $p,q,s$ be three distinct lines of $\pi$ different
 from $\ell$ forming a triangle and recover the alternating bilinear form $\phi^{\pi}$
 such that $\phi^{\pi}(p)=r_{\iota(p)+1}$, $\phi^{\pi}(q)=r_{\iota(q)+1}$, $\phi^{\pi}(s)=r_{\iota(s)+1}$.
 This corresponds
  to solving a linear system of $3$ equations in $3$ unknowns.
\item
  If all forms $\phi^{\pi}$ are such that $\phi^{\pi}(\ell)=r_{\iota(\ell)+1}$, then we claim that
  the value $r_{\iota(\ell)+1}$ is correct and set $c_{\iota(\ell)+1}=r_{\iota(\ell)+1}$;
  otherwise, take $c_{\iota(\ell)+1}$ as the value assumed by the majority of the forms $\phi^{\pi}$, as $\pi$ varies in $\Sigma_{\ell}$, when evaluated on $\ell$.
\end{enumerate}

The same correction strategy can be implemented for polar symplectic Grassmann
codes, by considering totally isotropic planes instead of totally singular
ones.

\section*{Acknowledgments}
This research  was performed within the activity of GNSAGA of INdAM (Italy) whose support both authors acknowledge.

\section*{References}

\end{document}